\documentclass[pra,onecolumn,longbibliography,superscriptaddress]{revtex4-2}
%\documentclass[twocolumn,nofootinbib]{nature}
%\documentclass[twocolumn,pra]{revtex4-2}

% packages
%\usepackage[dvipsnames, svgnames, x11names]{xcolor}
\usepackage[dvips]{graphicx} % for figures
\usepackage{amsfonts,amssymb,amscd,amsmath,amsthm}
\usepackage{textcomp}
\usepackage{enumerate}
\usepackage{epsfig}
\usepackage{subfigure}
\usepackage{xcolor}
\usepackage[outdir=./]{epstopdf}
\usepackage[colorlinks = true]{hyperref}
\usepackage{physics}
\usepackage{appendix}
\usepackage{comment, apptools, thmtools, thm-restate}
\usepackage{diagbox}
\usepackage[most]{tcolorbox}

\usepackage{tikz}
%\usetikzlibrary{quantikz}
\usepackage[]{qcircuit}
\usetikzlibrary{arrows}
\usetikzlibrary{shapes,fadings,snakes}
\usetikzlibrary{decorations.pathmorphing,patterns}
\usetikzlibrary{calc}
\usetikzlibrary{positioning}
\usepackage{etex}
\usepackage{pgfplots}
\usepgfplotslibrary{ternary}
\pgfplotsset{compat=newest}

\newtheorem{theorem}{Theorem}

\newtheorem{corollary}{Corollary}

\newcommand{\lket}[1]{\vert #1 \rangle\!\rangle}
\newcommand{\lbra}[1]{\langle\!\langle #1 \vert}
\newcommand{\lbraket}[2]{\langle\!\langle #1 \vert #2 \rangle\!\rangle}
\newcommand{\lketbra}[2]{\vert #1 \rangle\!\rangle\langle\!\langle #2 \vert}

\newtcolorbox[auto counter]{mybox}[2][]{
	enhanced,
	breakable,
	colback=blue!5!white,
	colframe=blue!75!black,
	fonttitle=\bfseries,
	title=Box \thetcbcounter: #2,#1
}

\graphicspath{{./figure/}}

%\AtAppendix{\counterwithin{lemma}{section}}
%\AtAppendix{\counterwithin{definition}{section}}

\begin{document}

\title{Robust Estimation of Nonlinear Properties of Quantum Processes}

\begin{abstract}
Accurate and robust estimation of quantum process properties is crucial for quantum information processing and quantum many-body physics. Combining classical shadow tomography and randomized benchmarking, Helsen et al.~introduced a method to estimate the linear properties of quantum processes. In this work, we focus on the estimation protocols of nonlinear process properties that are robust to state preparation and measurement errors. We introduce two protocols, both utilizing random gate sequences but employing different post-processing methods, which make them suitable for measuring different nonlinear properties. The first protocol offers a robust and sound method to estimate the out-of-time-ordered correlation, as demonstrated numerically in an Ising model. The second protocol estimates unitarity, effectively characterizing the incoherence of quantum channels. We expect the two protocols to be useful tools for exploring quantum many-body physics and characterizing quantum processes.
\end{abstract}
 \date{\today}
 
\author{Yuqing Wang}

\affiliation{Center for Quantum Information, Institute for Interdisciplinary Information Sciences, Tsinghua University, Beijing 100084, P.~R.~China}

\author{Guoding Liu}
\email{lgd22@mails.tsinghua.edu.cn}

\affiliation{Center for Quantum Information, Institute for Interdisciplinary Information Sciences, Tsinghua University, Beijing 100084, P.~R.~China}

\author{Zhenhuan Liu}
\email{liu-zh20@mails.tsinghua.edu.cn}

\affiliation{Center for Quantum Information, Institute for Interdisciplinary Information Sciences, Tsinghua University, Beijing 100084, P.~R.~China}

\author{Yifan Tang}
\affiliation{Department of Mathematics and Computer Science, Freie Universit\"{a}t Berlin, 14195 Berlin, Germany}

\author{Xiongfeng Ma}
\email{xma@tsinghua.edu.cn}

\affiliation{Center for Quantum Information, Institute for Interdisciplinary Information Sciences, Tsinghua University, Beijing 100084, P.~R.~China}
\author{Hao Dai}
\email{dhao@bimsa.cn}
\affiliation{Yanqi Lake Beijing Institute of Mathematical Sciences and Applications (BIMSA), Huairou District, Beijing 101408, P.~R.~China}

\maketitle
\tableofcontents
\clearpage

\section{Introduction}
With the development of quantum technology, the ability to control large quantum systems enables us to simulate extensive quantum many-body systems and investigate associated phenomena. Quantum processes, central to quantum physics, are mathematically described as quantum channels, which are completely positive and trace-preserving maps. All properties of interest within quantum processes are functions of the quantum channel, such as the scrambling strength~\cite{Swingle2018Unscrambling}, the unitarity~\cite{Wallman2015coherence}, and the similarity between two quantum processes~\cite{Belavkin2005distance}. While quantum process tomography~\cite{Chuang1997tomo} offers a straightforward approach to estimating all these properties, its complexity grows exponentially with the number of qubits, rendering it impractical even for small-scale quantum systems.

Fortunately, full knowledge of a quantum channel is not always necessary to estimate some specific properties. Estimating partial knowledge of quantum channels can significantly reduce sample complexity compared to full tomography. This concept is analogous to progress in quantum state learning~\cite{Aaronson2018shadow}, such as classical shadow tomography~\cite{Huang2020shadow} and its variations~\cite{chen2021robust,hadfield2022measurements,huang2021efficient,Elben2023toolbox}, which utilize random measurements for efficient estimation of state properties. The process of classical shadow involves applying random unitary evolutions to the target state followed by computational basis measurements, allowing for the estimation of multiple state properties simultaneously. This approach has been adapted for channel property estimation through the Choi–Jamiołkowski isomorphism~\cite{levy2021classical}. Specifically, one inputs random states to the quantum channel and performs randomized measurements on output states, equivalent to performing randomized measurements on the Choi state of the target channel. Like state shadow tomography, this technique allows the simultaneous estimation of multiple quantum channel properties.

While the aforementioned method can efficiently estimate certain channel properties without exponential sample complexities, it lacks robustness against state preparation and measurement (SPAM) errors. In many practical systems, SPAM errors, particularly measurement errors, can be as significant as or even surpass quantum gate errors~\cite{StrongQuantumComputation,Supremacy}. Therefore, accurately estimating channel properties necessitates mitigating the influence of SPAM errors. Randomized benchmarking serves as a widely adopted protocol for this purpose, allowing property estimation of quantum channels while reducing the impact of SPAM errors~\cite{helsen2022framework}. However, conventional randomized benchmarking protocols are limited in measurable properties, primarily restricted to properties like average fidelity~\cite{knill2008rb}.

Utilizing group twirling and fitting techniques, Helsen et al.~have combined randomized benchmarking with classical shadow to estimate arbitrary linear properties of quantum channels, robust against SPAM errors~\cite{helsen2021estimating}. However, to fully explore quantum phenomena and characterize quantum channels, linear properties alone are insufficient. Many critical nonlinear properties exist, such as out-of-time-ordered correlation (OTOC)~\cite{Swingle2018OTOC}, a measure of information scrambling. OTOC is critical in both quantum many-body physics \cite{FAN2017otocmbl,Chen2023LRB} and quantum information \cite{shen2020scrambling,Leone2021quantumchaosis}. Yet, current estimation methods for OTOC lack robustness against SPAM errors \cite{Vermersch_2019,Sundar2022otoc,green2022otoc,wang2021verifying,yoshida2019disentangling,xiao2021scrambling,Garcia2021scrambling}, posing a challenge in observing information scrambling. In addition, other nonlinear properties such as unitarity and magic, which characterize the incoherence and non-stabilizerness of a quantum channel, are also important and widely discussed in quantum information field \cite{Wallman2015coherence,Nonstabilizerness2023Leone}.

Building upon Helsen et al.'s framework~\cite{helsen2021estimating}, we introduce two protocols to estimate nonlinear properties of quantum channels with robustness against SPAM errors, as shown in Fig.~\ref{fig:overview}. Both protocols implement random gate sequences sampled from a group and perform a positive operator-valued measure (POVM) to collect shadow data at the first step. Then, they differ in classical data post-processing. The first protocol utilizes correlations between measurement data from independently chosen gate sequences, while the second harnesses correlations from identical gate sequences. The expectation of the correlation is a multiple exponential decay function with the circuit depth. Through exponential fitting, one can get nonlinear properties of quantum channels excluding the SPAM error. Different classical postprocessing procedures make the two protocols suitable for evaluating different channel properties and exploring different phenomena within quantum many-body systems. As an application, we employ the first protocol to measure OTOC and theoretically analyze its sample complexity. In a long-range interaction Ising model, we numerically demonstrate the effectiveness of the protocol and resilience to SPAM errors. Moreover, we explore the potential of the second protocol in estimating unitarity~\cite{Wallman2015coherence}. We analyze the type of channel properties that can be measured with each protocol when the matrix exponential fitting is allowed. We find that, when the gate sequences are sampled from a unitary two-design group, the measurable quantities of the first protocol cover those of the second one.

\begin{figure}
    \centering
    \includegraphics[width=18cm]{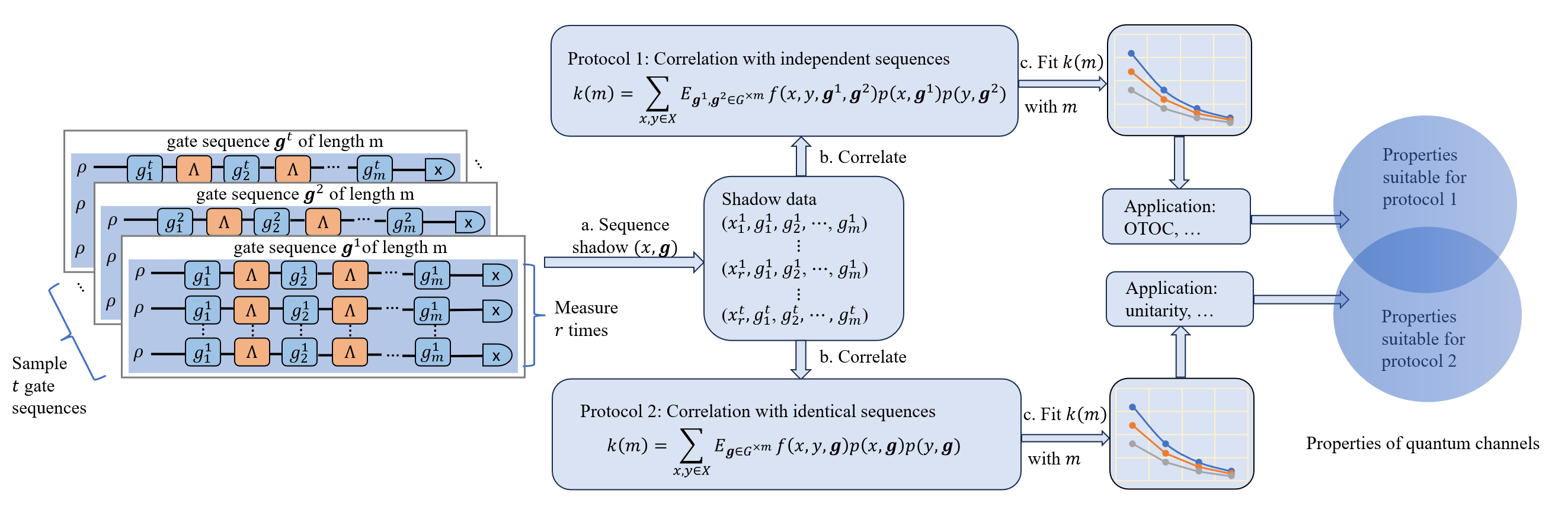}
    \caption{
    % \lzh{remove the position of fit}
    The workflow of the nonlinear channel properties estimation via the generalized uniformly independent random sequence protocol introduced in Sec~\ref{sc:ncshadow} and Sec.~\ref{sc:ncrandom}. In step a, we obtain the shadow sequence data $(x,\mathbf{g})$ by implementing random gate sequences and POVM measurements. For each sequence length $m$, we measure the same gate sequences for $r$ times and iterate the process with $t$ different gate sequences. In step b, we calculate the correlation function $k(m)$ using the shadow sequence data. The first protocol utilizes data from independent sequences, while the second protocol involves data from identical sequences. The entire procedure is repeated for several different sequence lengths $m$. In step c, we perform a fitting of the correlation function $k(m)$ against the sequence length $m$ to extract desired nonlinear channel properties.
    With different group $G$, the two protocols may be suitable for evaluating different channel properties. In particular, if $G$ is the global $n$-qubit Clifford group, protocol 1 can estimate all the properties that protocol 2 can obtain, as discussed in Sec~\ref{ssc:measure}.}
    \label{fig:overview}
\end{figure}

This work is organized as follows. In Section~\ref{sc:pre}, we introduce basic notations and the necessary preliminaries for this work. In Section~\ref{sc:nonlinear}, we present two protocols to estimate nonlinear channel properties, provide two illustrative examples, and discuss the measurable properties of these two protocols. In Section~\ref{sc:simulation}, we showcase the results of our simulations pertaining to OTOC estimation. We conclude in Section~\ref{sc:conclusion}.

\section{Preliminaries}\label{sc:pre}
In this section, we define key notations and review essential concepts in the Pauli-Liouville representation, the Clifford group, and the uniformly independent random sequence (UIRS) shadow \cite{helsen2021estimating}.

\subsection{Pauli-Liouville representation}
For an $n$-qubit system, $\mathbb{C}^{d}$ with the dimension $d=2^n$, the normalized Pauli group is defined as 
\begin{equation}
	\mathbb{P}_n=\left\{\frac{\sigma}{\sqrt{d}} \Bigm| \sigma\in\{I,\sigma_x,\sigma_y,\sigma_z\}^{\otimes n} \right\}. 
\end{equation}
The elements of this group, $I,\sigma_x,\sigma_y$, and $\sigma_z$, represent the identity operator, and Pauli $X$, $Y$, and $Z$ operators, respectively, defined as,
\begin{equation}
	I = \begin{pmatrix}
		1 & 0\\
		0 & 1
	\end{pmatrix},
	\sigma_x = \begin{pmatrix}
		0 & 1\\
		1 & 0
	\end{pmatrix},
	\sigma_y = \begin{pmatrix}
		0 & -i\\
		i & 0
	\end{pmatrix},
	\sigma_z = \begin{pmatrix}
		1 & 0\\
		0 & -1
	\end{pmatrix}.
\end{equation}
In certain contexts, the identity operator is excluded, leading to the set $\mathbb{P}_n^0=\mathbb{P}_n\backslash \frac{\mathbf{1}}{\sqrt{d}}$, where we define $\mathbf{1}=I^{\otimes n}$. The Pauli group forms an orthonormal basis in the linear operator space $L(\mathbb{C}^{d})$, equipped with the inner product $\langle A, B \rangle=\Tr(A^{\dagger}B)$. Consequently, any operator $O$ can be expressed as a sum over this basis,
\begin{equation}
	O=\sum_{\sigma_i\in\mathbb{P}_n }\langle \sigma_i, O \rangle \sigma_i.
\end{equation}

The Pauli-Liouville representation vectorizes the operator space, defined as the linear map $\lket{\cdot}: L(\mathbb{C}^{d}) \rightarrow \mathbb{C}^{d^2}$, by assigning $\lket{\sigma_i} = e_i$, where $\sigma_i \in \mathbb{P}_n$ and $\{e_i\}$ forms an orthonormal basis in $\mathbb{C}^{d^2}$. Furthermore, the inner product in the Hilbert space $\mathbb{C}^{d^2}$ is represented as $\lbraket{A}{B} = \langle A, B \rangle = \Tr(A^{\dagger}B)$. By linearity, the Pauli-Liouville representation of an operator $O$ is expressed as,
\begin{equation}
	\lket{O} = \sum_{\sigma_i \in \mathbb{P}_n} \langle \sigma_i, O \rangle \lket{\sigma_i}.
\end{equation}
% In this context, we use the same notations of the Pauli-Liouville representation and the standard representation for quantum channels, provided that it does not lead to any confusion.

For a quantum channel $\mathcal{E}: L(\mathbb{C}^{d}) \rightarrow L(\mathbb{C}^{d})$, it can be represented as a matrix in the Pauli-Liouville representation. The elements of this matrix are given by,
\begin{equation}
	\mathcal{E}_{i,j}=\lbra{\sigma_i}\mathcal{E}\lket{\sigma_j}=\lbraket{\sigma_i}{\mathcal{E}(\sigma_j)}=\Tr(\sigma_i \mathcal{E}(\sigma_j)).  
\end{equation}
In this representation, the composition of channels corresponds to matrix multiplication. Specifically, for two quantum channels $\mathcal{E}_1$ and $\mathcal{E}_2$ acting on an operator $O$, the composition is expressed as,
\begin{equation}
	\lket{\mathcal{E}_1\circ \mathcal{E}_2(O)}=\mathcal{E}_1\mathcal{E}_2\lket{O}.
\end{equation}

\subsection{Clifford group}\label{ssc:Clifford}
The Clifford group $\mathcal{C}_n$ comprises $n$-qubit unitary operators that normalize the Pauli group $\mathbb{P}_n$. An operator $U$ is an element of the Clifford group if and only if for every Pauli operator $P \in \mathbb{P}_n$, the conjugated operator $UPU^\dagger$ is also a Pauli operator, up to a global phase. Mathematically, this is expressed as: for each $P \in \mathbb{P}_n$ and $U\in\mathcal{C}_n$, there exists $P' \in \mathbb{P}_n$ and a global phase $e^{i\theta\pi}$ such that $ UPU^{\dagger} = e^{i\theta\pi} P'$.

In the Pauli-Liouville representation, the Clifford group decomposes into two nonequivalent irreducible representations, $\forall g \in \mathcal{C}_n$,
\begin{equation}
	\omega(g)=\tau_{tr}(g)\oplus \tau_{ad}(g),
\end{equation}
where $\tau_{tr}$ is the trivial representation supported on the normalized identity matrix, $\lket{\mathbf{1}/ \sqrt{d}}$, and $\tau_{ad}$ is the adjoint representation supported on the traceless matrices. The projector onto the representation space of $\tau_{tr}$ is $P_{tr} = \lketbra{\mathbf{1}}{\mathbf{1}}/d$, while the projector onto the irreducible representation space of $\tau_{ad}$ is given by $P_{ad} = \sum_{\sigma \in \mathbb{P}_n^0}\lketbra{\sigma}{\sigma}$. 
% \lzh{P? no normalization?}

\subsection{Uniformly independent random sequence shadow}
The uniformly independent random sequence shadow (UIRS)~\cite{helsen2021estimating} can estimate the linear properties of noisy gate sets by combining the methods of randomized benchmarking and classical shadow tomography. The protocol starts with the collection of shadow sequence data obtained from a random sequence of gates, followed by a POVM measurement. The subsequent step is computing the correlation function from the shadow data, which encodes the desired properties of the gate set. The final step entails estimating the gate set properties through a fitting process \cite{helsen2021estimating}. Notably, the UIRS protocol exhibits robustness against SPAM errors, similar to the conventional randomized benchmarking protocol.

Now, we proceed to formalize the UIRS protocol mathematically, following the procedures depicted in Fig~\ref{fig:overview}. Let $\mathbf{g} =(g_1,\cdots,g_m)$ denote a random sequence, comprising $m$ gates where each gate $g_i$ is uniformly and independently selected from a unitary gate-set $\mathbb{G}$. %, as shown in Fig.~\ref{fig:ranseq}. 
Typically, $\mathbb{G}$ is chosen as a group, and we adhere to this convention in our work. Given an input state, $\rho$, and a POVM with a finite-outcomes set $\mathcal{X}$, $\{E_{x}\}_{x\in\mathcal{X}}$, each iteration of the protocol yields a piece of shadow data, denoted as $(x,\mathbf{g})$. This data contains the measurement outcome $x$ and the random gate sequence $\mathbf{g}$. The probability of acquiring this specific data is given by:
\begin{equation}
p(x,\mathbf{g})=\lbra{E_x}\mathcal{E}(\mathbf{g})\lket{\rho},
\end{equation}
where $\mathcal{E}(\mathbf{g})$ denotes the noisy implementation of the random sequence $\mathbf{g}$.

In experiments, the presence of noise is inevitable. We model the noisy initial state and measurement as $\tilde{\rho}$ and $\{\tilde{E_x}\}$, respectively. Regarding the noisy implementation of the random sequence, we assume gate-independent noise, a special type of Markovian noise. Specifically, for each gate $g_i$ in the random sequence, the ideal unitary implementation is given by
\begin{equation}
\omega(g_i)\lket{\rho}=\lket{g_i(\rho)},
\end{equation}
where $\omega$ is the Liouville representation of the group $\mathbb{G}$ and $g_i(\rho)=U_{g_i}\rho U_{g_i}^{\dagger}$. 
The actual implementation of the noisy gate is given by 
\begin{equation}\label{eq:noisygi}
\phi(g_i) = \Lambda_L \omega(g_i) \Lambda_R,
\end{equation}
where $\Lambda_L$ and $\Lambda_R$ are noise channels independent of the gate choice. Then, the channel $\Lambda = \Lambda_R \Lambda_L$ denotes the noise occurring between gates. Consequently, the overall implemented channel for the random sequence is
\begin{equation}
	\mathcal{E}(\mathbf{g})=\prod_{i=1}^{m}\phi(g_i).
\end{equation}
Under these assumptions and considering errors in state and measurement, the probability of obtaining data $(x,\mathbf{g})$ is given by
\begin{equation}
	p(x,\mathbf{g})=\lbra{\tilde{E}_x}\prod_{i=1}^{m}\phi(g_i)\lket{\tilde{\rho}}.
\end{equation}

After multiple independent experimental rounds, we acquire a collection of gate-set shadows $\{(x_i,\mathbf{g}_i)\}_{i=1}^S$. From the shadow data, we aim to extract meaningful information, typically in the form of an expectation value of a sequence correlation function. We consider a correlation function $f(x,\mathbf{g},m): \mathcal{X} \times \mathbb{G}^{\times m} \times \mathbb{Z^+} \rightarrow \mathbb{C}$, with a specific form,
\begin{equation}\label{eq:UIRScorr}
f_A(x,\mathbf g, m)=\Tr(B_x\tau(g_m)\prod_{i=1}^{m-1}A\tau(g_i)),  
\end{equation}
where $\tau$ represents an irreducible representation of group $\mathbb{G}$ and $A$ and $B_x$ denote preselected operators supported on the representation space associated with $\tau$. The expectation of this correlation function across all possible choices of random gate sequences is given by
\begin{equation}
k_A(m)=\mathop{\mathbb E}\limits_{\mathbf g\in \mathbb{G}^{\times m}}\sum_{x\in\mathcal X}f_A(x,\mathbf g, m)p(x,\mathbf g).
\end{equation}
The quantity $k_A(m)$ is always referred to as the correlator with sequence correlation function $f_A$. It has been proven that $k_A(m)$ has an exponential behavior in $m$~\cite{helsen2021estimating},
\begin{equation}
k_A(m)=\Tr(\Theta(\{E_x\}_x,\rho)[\Phi(A,\Lambda)]^{m-1}),  
\end{equation}
where $\Theta(\{E_x\}_x,\rho)$ is a matrix dependent solely on SPAM and $B_x$; $\Phi(A,\Lambda)$ is a matrix dependent on $\mathbb{G}$, $A$, and $\Lambda$. If the Liouville representation of $\mathbb{G}$ has decomposition, $\omega(g)=\tau(g)^{n_{\tau}}\oplus\omega'(g)$, where $\omega'(g)$ contains no copy of $\tau(g)$, then $\Phi(A,\Lambda)$ reduces to
\begin{equation}
\Phi(A,\Lambda)_{i,j}=\frac{1}{|P_j|}\Tr(P_iAP_j\Lambda),  
\end{equation}
where $P_i$ is the projector onto the $i$-th copy of $\tau(g)$ inside the representation of $\omega(g)$ and $|P_j| = \rank(P_j)$ represents the dimension of $\tau(g)$. With the exponential fitting, one can obtain $\Phi(A, \Lambda)$ while excluding the influence of the SPAM, $\rho$ and $E_x$. Each matrix element of $\Phi(A, \Lambda)$, $\Tr(P_iAP_j\Lambda)$, is a linear function of $\Lambda$ controlled by $A$. We can thus estimate linear properties of $\Lambda$.

In reality, we cannot get the correlator $k_A$ perfectly due to the finite sampling, but we can construct an estimator, $\hat{k}_A(m)$, from the shadow data such that $\hat{k}_A(m)$ converges to $k_A(m)$ when the number of experiment rounds is sufficiently large. A standard estimator is given by
\begin{equation}
\hat{k}_A(m)=\frac{1}{S} \sum_{i=1}^{S}f_A(x_i,\mathbf{g}_i, m).
\end{equation}
When $S\rightarrow \infty$, this estimator will converge to $k_A(m)$ and allow one to get $\Phi(A, \Lambda)$.

In summary, the UIRS protocol serves as a valuable tool for estimating linear properties of noise channels associated with gate sets. Alternatively, when considering a noiseless gate-set $\mathbb{G}$ and artificially inserting the noise channel $\Lambda$ between two consecutive random gates $g_i$ and $g_{i+1}$, the UIRS protocol can be interpreted as a method for extracting properties of any given channel. Typically, $\Lambda$ can be chosen as a unitary evolution, which is of particular interest in studies of quantum many-body systems. In the following section, we will extend the UIRS protocol to the estimation of nonlinear properties of quantum channels, focusing on evaluating the OTOC of unitary quantum evolution and the unitarity of the quantum channels.

\section{Nonlinear channel properties estimation via generalized UIRS}\label{sc:nonlinear}
In this section, we present the generalized UIRS protocol to estimate nonlinear channel properties. Our protocol shares the quantum procedure with the UIRS protocol, that is, applying the random gate sequences and measurements to obtain the shadow data $(x,\mathbf{g})$. The difference lies in the classical postprocessing stage. In the UIRS protocol, the correlation function only involves one piece of shadow data like Eq.~\eqref{eq:UIRScorr}. In our protocol, we evaluate the correlation function by incorporating two sets of shadow data instead of one. The gate sequences in two pieces of shadow data can be chosen as independent or identical, corresponding to two different protocols. We study the two cases separately and draw a comparative analysis.

\subsection{Nonlinear correlation via independent sequences}\label{sc:ncshadow}
We first introduce the generalized UIRS protocol utilizing two pieces of shadow data derived from independent gate sequences. To illustrate, we provide a specific example demonstrating the effective estimation of the OTOC within this protocol.

After obtaining the shadow data, we can construct a sequence correlation function that encodes the information of the desired quantity. In this context, we consider a second-order sequence correlation function defined as
\begin{equation}\label{eq:independcorr}
f(x,y,\mathbf{g}^1,\mathbf{g}^2,m):\mathcal{X}\times \mathcal{X}\times \mathbb{G}^{\times m}\times \mathbb{G}^{\times m}\times\mathbb{Z^+} \rightarrow \mathbb{C},
\end{equation}
which depends on two independent random sequences $\mathbf{g}^1$ and $\mathbf{g}^2$ and their measurement outcomes $x$ and $y$. 
Notice that the function is analogous to the observable with respect to two copies of the state in the case of state shadow tomography. The expectation of the correlation function is taken over the group $\mathbb{G}^{\times m}\times \mathbb{G}^{\times m}$ and the measurement outcomes:
\begin{equation}\label{eq:expcor} %\label{eq:kfm}
k_f(m)=\sum_{x,y\in\mathcal{X}} \mathbb{E}_{\mathbf{g}^1,\mathbf{g}^2\in \mathbb{G}^{\times m}} f(x,y,\mathbf{g}^1,\mathbf{g}^2,m)p(x,\mathbf{g}^1)p(y,\mathbf{g}^2).
\end{equation}
With the sequence shadow data $\{(x_i,\mathbf{g}^i)\}_{i=1}^{S}$, we can construct an estimator $\hat{k}_f(m)$ for $k_{f}(m)$ as shown below:
\begin{equation}\label{eq:kfmhat}
\hat{k}_f(m) = \frac{1}{S(S-1)}\sum_{i\neq j}f(x_i,x_j,\mathbf{g}^i,\mathbf{g}^j, m).\\
% \red{\hat{k}_f(m) = \frac{1}{S(S-1)}\sum_{i\neq j}\sum_{x,y}\overline{f(x,y,\mathbf{g}_i,\mathbf{g}_j)}}\text{ simulation version}
\end{equation}
% \blue{The formula for simulation is not the above equation.}
When the number of samples $S$ tends to infinity, the estimator $\hat{k}_f(m)$ converges to $k_f(m)$.

In this work, we investigate a special kind of correlation function in the form of
\begin{equation}\label{eq:irrtyp}
f_A(x,y,\mathbf{g}^1,\mathbf{g}^2, m)=\Tr(B_{xy}[\tau_1 (g^{1}_{m})\otimes \tau_2(g^2_m)] \prod_{i=1}^{m-1} A[\tau_1(g^1_i)\otimes \tau_2(g^2_i)]),
\end{equation}
where $\tau_1$ and $\tau_2$ are both irreducible representations of group $\mathbb{G}$ and $A$, $B_{xy}$ are operators supported on the corresponding representation space of $\tau_1\otimes \tau_2$. In this case, the expectation value $k_f(m)$ is denoted as $k_A(m)$ since it encodes information of the operator $A$. For the special-type correlation function of Eq.~\eqref{eq:irrtyp}, Theorem~\ref{th:exdec1} tells us that the expectation value $k_A(m)$ exhibits an exponential decay behavior with respect to the sequence length $m$. The complete proof is available in Appendix~\ref{appendssc:thm1pf}.

\begin{theorem}\label{th:exdec1}
(Exponential decay) Given a generalized UIRS protocol with group $\mathbb{G}$, the expectation value $k_A(m)$ of the second-order correlation function given by Eq.~\eqref{eq:irrtyp} has the following form:
\begin{equation}
    k_A(m)=\Tr(\Theta(\{E_x\},\rho)[\Phi(A,\Lambda)]^{m-1}),
\end{equation}
where $\Theta(\{E_x\},\rho)$ and $\Phi(A,\Lambda)$ are induced matrices dependent on SPAM and channel $\Lambda$, respectively. 
If $\omega(g)=\tau_1(g)^{n_{\tau1}}\oplus \omega'_1(g)$ where $\omega'_1(g)$ contains no copy of $\tau_1(g)$ and $\omega(g)=\tau_2(g)^{n_{\tau2}}\oplus \omega'_2(g)$ where $\omega'_2(g)$ contains no copy of $\tau_2(g)$, then the matrix $\Phi(A,\Lambda)$ has the element
\begin{equation}\label{eq:decayshadow}
[\Phi(A,\Lambda)]_{ii',jj'}=\frac{1}{|P_j||P_{j'}|}\Tr((P_i\otimes P_{i'})A^{T}(P_{j}\otimes P_{j'})\Lambda^{\otimes 2}),
\end{equation}
where $P_i$ and $P_j$ are the projectors onto the $i$-th and $j$-th copies of $\tau_1$ inside the representation $\omega$, respectively, $P_{i'}$ and $P_{j'}$ are the projectors onto the $i'$-th and $j'$-th copies of $\tau_2$ inside the representation $\omega$, respectively. In the special case that the representation $\omega$ only has one copy of $\tau_1$ and one copy of $\tau_2$, the matrix $\Phi(A,\Lambda)$ reduces to a number.
\end{theorem}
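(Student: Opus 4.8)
The plan is to mirror the proof strategy of the original UIRS exponential-decay result from Ref.~\cite{helsen2021estimating}, adapting it to the tensor-product setting of two independent sequences. The key observation is that because $\mathbf{g}^1$ and $\mathbf{g}^2$ are chosen independently, the expectation $\mathbb{E}_{\mathbf{g}^1,\mathbf{g}^2}$ factorizes over the two sequences, and the joint object behaves like a single UIRS protocol run with the group $\mathbb{G}$ acting via the \emph{tensor representation} $g \mapsto \omega(g)\otimes\omega(g)$ (the diagonal action, since at step $i$ both sequences use freshly sampled but structurally identical gates --- here $g^1_i$ and $g^2_i$ are distinct samples, so in fact the relevant twirl is over $\mathbb{G}\times\mathbb{G}$ acting by $\omega\otimes\omega$). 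First I would write out $k_A(m)$ explicitly by substituting the noisy probabilities $p(x,\mathbf{g}^1)=\lbra{\tilde E_x}\prod_i\phi(g^1_i)\lket{\tilde\rho}$ and the analogous expression for $p(y,\mathbf{g}^2)$ into Eq.~\eqref{eq:expcor}, and combine the two scalar products into a single expectation value on the doubled space $\mathbb{C}^{d^2}\otimes\mathbb{C}^{d^2}$, using $\phi(g_i)=\Lambda_L\omega(g_i)\Lambda_R$.

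Next I would perform the averaging over the gate sequences layer by layer. Since each $g^1_i,g^2_i$ is drawn independently, the expectation over the $i$-th layer produces the twirl superoperator $\mathbb{E}_{g^1,g^2}\,[\omega(g^1)\otimes\omega(g^2)](\cdots)[\omega(g^1)\otimes\omega(g^2)]^{\dagger}$ sandwiched around the $\Lambda^{\otimes 2}$ (with appropriate $\Lambda_L,\Lambda_R$ factors) coming from that layer's noise, while the operator $A$ sits between consecutive layers in the correlation function $f_A$. The crucial input is Schur's lemma applied to the representation $\omega\otimes\omega$: this twirl projects onto the commutant, which decomposes according to the isotypic components of $\omega$. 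Restricting attention, as in the theorem's hypothesis, to the case where $\omega$ contains $n_{\tau1}$ copies of $\tau_1$ and $n_{\tau2}$ copies of $\tau_2$, the relevant block of the commutant of $\tau_1\otimes\tau_2$ (assuming $\tau_1\not\cong\tau_2$, or handling the $\tau_1\cong\tau_2$ case with the standard symmetrizer/antisymmetrizer refinement) is spanned by the ``transporter'' operators between copies, indexed by pairs $(i,i')$ and $(j,j')$. Carrying out the twirl on each of the $m-1$ internal layers collapses the sequence product into the $(m-1)$-th power of a transfer matrix on this finite-dimensional index space, which is exactly $\Phi(A,\Lambda)$, with its entries given by the overlap $\frac{1}{|P_j||P_{j'}|}\Tr((P_i\otimes P_{i'})A^{T}(P_j\otimes P_{j'})\Lambda^{\otimes 2})$; the transpose on $A$ appears because $A$ enters $f_A$ on the ``bra'' side relative to how the $\phi(g_i)$ act on the state in $p(x,\mathbf{g})$, exactly as in the original single-sequence derivation. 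The leftover SPAM-dependent pieces --- the boundary terms involving $\tilde\rho$, $\tilde E_x$, the extra $\Lambda_L/\Lambda_R$ at the ends, and $B_{xy}$ --- get absorbed into the prefactor matrix $\Theta(\{E_x\},\rho)$, summed over $x,y$.

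I would close by verifying the degenerate statement: when $\omega$ contains exactly one copy each of $\tau_1$ and $\tau_2$, the index sets $\{(i,i')\}$ and $\{(j,j')\}$ are singletons, so $\Phi(A,\Lambda)$ is $1\times 1$, i.e.\ a scalar, and $k_A(m)=\Theta\cdot[\Phi(A,\Lambda)]^{m-1}$ is a pure single-exponential in $m$. The main obstacle I anticipate is bookkeeping the twirl over $\omega\otimes\omega$ correctly: one must be careful about whether $\tau_1$ and $\tau_2$ are isomorphic as subrepresentations of $\omega$ (which changes the structure of the commutant and hence which cross-terms between copies survive), about the precise placement of transposes and complex conjugates induced by moving operators between the bra and ket sides of the Liouville inner product, and about correctly isolating which parts of the boundary noise $\Lambda_L,\Lambda_R$ merge into $\Theta$ versus which combine into the bulk $\Lambda^{\otimes 2}=(\Lambda_R\Lambda_L)^{\otimes 2}$. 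These are the same subtleties handled in Ref.~\cite{helsen2021estimating} for the linear case, now carried through on the doubled space, so the argument is structurally routine but notationally heavy; the full details are deferred to Appendix~\ref{appendssc:thm1pf}.
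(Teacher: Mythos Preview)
Your proposal is correct and follows essentially the same route as the paper's proof: write $k_A(m)$ as a single trace on the doubled Liouville space, average layer by layer so that each step produces the projector $P_{\tau_1\otimes\tau_2}=\mathbb{E}_{g^1}[\tau_1(g^1)\otimes\omega(g^1)]\otimes\mathbb{E}_{g^2}[\tau_2(g^2)\otimes\omega(g^2)]$, and read off $\Phi(A,\Lambda)$ as the restriction of $A\otimes\Lambda^{\otimes 2}$ to the image of that projector while the boundary data collects into $\Theta$. One small over-worry: because $g^1_i$ and $g^2_i$ are sampled \emph{independently}, the twirl factorizes completely as $P_{\tau_1}\otimes P_{\tau_2}$ regardless of whether $\tau_1\cong\tau_2$, so the symmetrizer/antisymmetrizer refinement you mention is unnecessary here (it would only enter in the identical-sequences setting of Theorem~\ref{th:exdec2}).
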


As an application of this generalized UIRS protocol, we demonstrate that OTOC can be estimated via the generalized UIRS protocol robust to SPAM errors. We first briefly review the definition of OTOC, which originates from information scrambling. Information scrambling is one of the signatures of quantum chaos in many-body systems, and the strength of the scrambling can be quantitatively characterized by OTOC~\cite{Swingle2018Unscrambling,PhysRevX.9.031048}. Specifically, for a local operator $V$, after a unitary dynamics $U_t=e^{-iHt}$ controlled by a Hamiltonian $H$, the local information encoded by $V$ can spread over many sites and become nonlocal. Assume there is another operator $W$ located at a different position from $V$ and define $V(t)=U_t^{\dagger} V U_t$. Initially, $V = V(0)$ is localized and commutes with $W$, but $V(t)$ will be nonlocal and become non-commutative with $W$ when $V(t)$ spread to the position of $W$. Consequently, the operator growth associated with $U_t$ can be characterized by the expectation value of the squared commutator of $V(t)$ and $W$ over a state $\rho$,
\begin{equation}
    C(t)=\langle [V(t),W]^{\dagger}[V(t),W]\rangle_{\rho}.
\end{equation}
A closely related quantity is OTOC, defined by 
\begin{equation}
  O(t)=\langle W^{\dagger}V(t)^{\dagger}WV(t) \rangle_{\rho},
\end{equation}
which satisfies 
\begin{equation}
   C(t)=2(1-\Re(O(t))).
\end{equation}
The quantities $O(t)$ and $C(t)$ both measure the strength of information scrambling. Here, we take $\rho$ as the maximally mixed state, which can also be regarded as the thermal state of infinite temperature. In this case, OTOC has an explicit form:
\begin{equation}
O(t)=\frac{1}{d}\Tr(W^{\dagger}V(t)^{\dagger}WV(t)).
\end{equation}

Below, we focus on the multi-qubit system with dimension $d=2^n$. We consider $V$ and $W$ to be nontrivial and unnormalized Pauli operators,  $V, W \in \sqrt{d}\mathbb{P}_n^0$. Notice that the situation can also be generalized straightforwardly to the case that $V$ and $W$ are arbitrary operators. In the following theoretical analysis, we assume that the random gate is taken from the multi-qubit Clifford group and is noiseless. To estimate the OTOC robustly against SPAM errors, we introduce an adjustment of the generalized UIRS protocol by inserting a unitary gate $\mathcal{U}_t(\rho)=U_t \rho U_t^{\dagger}$ between two random Clifford gates. This can be interpreted as considering the unitary evolution $U_t$ as the noise, as depicted in Figure~\ref{fig:u2}. In reality, the Clifford gates $\{g_i\}$ are also noisy, and we can regard their noises as being absorbed by the unitary gate. For instance, when considering $\phi(g_i) = \Lambda_L \omega(g_i)\Lambda_R$ like Eq.~\eqref{eq:noisygi}, we can view the inserted gate to be a noisy implementation of the unitary gate, $\widetilde{\mathcal{U}}_t=\Lambda_R \mathcal{U}_t \Lambda_L$, which allows us to obtain the information of the noisy unitary dynamics robust to SPAM errors. In this sense, the implemented Clifford gate is ideal and then our protocol can estimate the OTOC of the noisy unitary evolution without the influence of Clifford gate noises.

\begin{figure}[!htbp]
 \[
 \Qcircuit @C=1em @R=.7em {
  \lstick{} & \gate{g_1} & \gate{U_t} & \gate{g_2} & \gate{U_t} & \qw & \cdots & & \gate{g_i} & \gate{U_t} & \gate{g_{i+1}} & \qw & \cdots & & \gate{U_t} & \gate{g_m} & \qw
 }
 \]
\caption{Random Clifford gate sequence $g_1, g_2, \cdots, g_m$ intertwined with a fixed unitary gate, $U_t$.}
\label{fig:u2}
\end{figure}
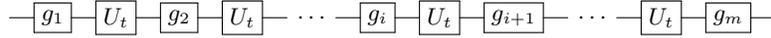

For estimating OTOC, we define the second-order correlation function as 
\begin{equation}\label{eq:otoccor}
\begin{split}
f_A(x,y,\mathbf{g}^1,\mathbf{g}^2, m)=\Tr(B_{xy}\tau_{ad}(g_m^1)\otimes \tau_{ad}(g_m^2)\prod_{i=1}^{m-1} A \tau_{ad}(g_i^1)\otimes \tau_{ad}(g_i^2)),
\end{split}
\end{equation}
where $A=(d^2-1)^2\sum_{\sigma\in \mathbb{P}_n^0}\Tr(W \sigma W\sigma)\lketbra{\sigma\otimes \sigma}{V \otimes V}$, $B_{xy}= \lketbra{\rho}{E_{x}}\otimes \lketbra{\rho}{E_{y}}$, and $\tau_{ad}$ is the adjoint representation supported on the traceless matrices. After obtaining shadow sequence data, we can estimate the expectation value $k_A(m)$. Next, we show the relationship between $k_A(m)$ and OTOC.
\begin{corollary}\label{th:exdec3}
The expectation value $k_A(m)$ of the second-order correlation function defined in Eq.~\eqref{eq:otoccor} has an exponential decay behaviour with respect to the sequence length $m$, $k_A(m)=a (dO(t))^{m-1}$. The decay parameter is $p(A)=dO(t)$ with $O(t)=\frac{1}{d}\Tr(W^{\dagger}V(t)^{\dagger}WV(t))$ being OTOC.
\end{corollary}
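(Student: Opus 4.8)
The plan is to obtain Corollary~\ref{th:exdec3} as a direct specialization of Theorem~\ref{th:exdec1} to the case $\tau_1=\tau_2=\tau_{ad}$ with the specific operators $A$ and $B_{xy}$ given in Eq.~\eqref{eq:otoccor}, and then to explicitly evaluate the resulting scalar $\Phi(A,\Lambda)$. Recall that for the multi-qubit Clifford group the Pauli-Liouville representation decomposes as $\omega(g)=\tau_{tr}(g)\oplus\tau_{ad}(g)$, so there is exactly one copy of $\tau_{ad}$ inside $\omega$. Therefore the hypothesis of the last sentence of Theorem~\ref{th:exdec1} is met: the matrix $\Phi(A,\Lambda)$ collapses to a single number, call it $p(A)$, and the prefactor $\Theta(\{E_x\},\rho)$ collapses to a single number $a$. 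Hence $k_A(m)=a\,p(A)^{m-1}$ automatically, and the entire content of the corollary reduces to showing $p(A)=dO(t)$.

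\textbf{Computing the decay rate.} With a single copy of $\tau_{ad}$ and the inserted ``noise'' channel $\Lambda=\mathcal{U}_t$ (acting as $\mathcal{U}_t^{\otimes 2}$ on the two-copy space), Eq.~\eqref{eq:decayshadow} gives
\begin{equation}
p(A)=\frac{1}{|P_{ad}|^2}\Tr\!\big((P_{ad}\otimes P_{ad})\,A^{T}\,(P_{ad}\otimes P_{ad})\,\mathcal{U}_t^{\otimes 2}\big),
\end{equation}
where $|P_{ad}|=\rank(P_{ad})=d^2-1$. Since $A$ is already supported on the traceless sector, the projectors $P_{ad}\otimes P_{ad}$ act trivially and can be dropped. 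Substituting $A=(d^2-1)^2\sum_{\sigma\in\mathbb{P}_n^0}\Tr(W\sigma W\sigma)\lketbra{\sigma\otimes\sigma}{V\otimes V}$, the factor $(d^2-1)^2$ cancels the $1/|P_{ad}|^2$, leaving
\begin{equation}
p(A)=\sum_{\sigma\in\mathbb{P}_n^0}\Tr(W\sigma W\sigma)\,\lbra{V\otimes V}\mathcal{U}_t^{\otimes 2}\lket{\sigma\otimes\sigma}=\sum_{\sigma\in\mathbb{P}_n^0}\Tr(W\sigma W\sigma)\,\lbraket{V}{\mathcal{U}_t(\sigma)}^2,
\end{equation}
where I have used the trace convention $\lbraket{A}{B}=\Tr(A^\dagger B)$ and $A^T$ versus $A$ is handled by noting $\lbra{V\otimes V}(\cdot)\lket{\sigma\otimes\sigma}$ against $\lketbra{\sigma\otimes\sigma}{V\otimes V}$. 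Now $\lbraket{V}{\mathcal{U}_t(\sigma)}=\Tr(V U_t\sigma U_t^\dagger)=\frac{1}{d}\Tr(d^{-1/2}\cdots)$—more carefully, since $V,W\in\sqrt{d}\,\mathbb{P}_n^0$ are the unnormalized Paulis and $\sigma\in\mathbb{P}_n^0$ is normalized, one gets $\lbraket{V}{\mathcal{U}_t(\sigma)}=\Tr(V U_t\sigma U_t^\dagger)$. The remaining step is to recognize the sum over the orthonormal traceless Pauli basis as a resolution of identity on the traceless sector and to invoke the swap-trick / Pauli-completeness identity $\sum_{\sigma\in\mathbb{P}_n}\sigma\otimes\sigma = \mathrm{SWAP}$ (and its traceless restriction) to fold the two Pauli sums back into a single trace, yielding $\frac{1}{d}\Tr(W^\dagger V(t)^\dagger W V(t))=O(t)$ up to the normalization factor $d$, so that $p(A)=dO(t)$.

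\textbf{Main obstacle.} The computational heart of the argument is the index bookkeeping in the last step: carefully tracking the normalization factors of $\sqrt{d}$ attached to $V$ and $W$ versus the normalized $\sigma$, disposing of the trivial-representation contribution (the identity Pauli is excluded from $\mathbb{P}_n^0$, which must be checked to be consistent with $V,W$ being traceless so that no $\sigma=\mathbf{1}/\sqrt{d}$ term is needed), and applying the Pauli swap identity $\sum_\sigma \Tr(A\sigma)\Tr(B\sigma)=\Tr(AB)$ twice—once to collapse $\sum_\sigma \Tr(W\sigma W\sigma)\lket{\sigma}\lbra{\sigma}$-type structure and once for the $U_t$-conjugated factors—to reassemble $\Tr(W V(t) W V(t))$. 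I expect the rest (exponential form, scalar collapse) to be immediate from Theorem~\ref{th:exdec1}; the only real work is confirming that all the constants multiply out to exactly $dO(t)$ with no stray factor of $d$ or $d^2-1$, and that the transpose $A^T$ in Eq.~\eqref{eq:decayshadow} does not introduce a spurious complex conjugation given that OTOC is generally complex.
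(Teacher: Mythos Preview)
Your proposal is correct and follows essentially the same route as the paper: invoke Theorem~\ref{th:exdec1} to collapse $\Phi(A,\Lambda)$ to the scalar $p(A)=\Tr(P_{ad}^{\otimes 2}A^{T}P_{ad}^{\otimes 2}\mathcal{U}_t^{\otimes 2})/|P_{ad}|^2$, drop the projectors because $A$ is supported on the adjoint sector, and then use Pauli completeness/the swap identity $\sum_{\sigma\in\mathbb{P}_n}\sigma\otimes\sigma=F$ to fold the remaining sum into $\Tr(W\,\mathcal{U}_t(V)\,W\,\mathcal{U}_t(V))$, with the excluded identity term vanishing because $\Tr V=0$. The only slip is that your displayed intermediate expression has the bra/ket direction reversed (you wrote $\lbra{V\otimes V}\mathcal{U}_t^{\otimes 2}\lket{\sigma\otimes\sigma}$ where $\Tr(A^T\mathcal{U}_t^{\otimes 2})$ actually gives $\lbra{\sigma\otimes\sigma}\mathcal{U}_t^{\otimes 2}\lket{V\otimes V}$), but since all entries are real in the Pauli-Liouville basis this at worst swaps $U_t\leftrightarrow U_t^\dagger$ and does not affect the argument.
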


The full proof of Corollary~\ref{th:exdec3} is available in Appendix~\ref{appendssc:thm2proof}. Thus, OTOC can be evaluated robustly with the protocol proposed in this section, with the simulation shown in Section~\ref{sc:simulation}. 

To analyze the sample complexity of this protocol, we evaluate the variance of the correlation function. 
Viewing $f_A$ defined in Eq.~\eqref{eq:otoccor} as a random variable with probability distribution from the shadow data, $p(x, \mathbf{g}^1)p(y, \mathbf{g}^2)$, the variance of this random variable is upper bounded by $O(d^{8m-12})$.
The full proof of the variance is detailed in Appendix~\ref{appendssc:var}. In the experiment, we can estimate the expectation value $k_A(m)$ through statistics method such as Eq.~\eqref{eq:kfmhat} when obtaining $S$ samples of shadow data. Particularly, since $k_A(2)=adO(t)$ and $k_A(1)=a$, $k_A(2)/k_A(1)$ is equivalent to the value of OTOC if ignoring a constant $d$. Hence, we can estimate OTOC by the ratio $\hat{k}_A(2,S)/\hat{k}_A(1,S)$ with $S$ samples. The variance of the estimators $\hat{k}_A(2,S)$ and $\hat{k}_A(1,S)$ can be upper bounded by $O(d^{4}S^{-2})+ O(S^{-1})$ and $O(d^{-4}S^{-2})+O(d^{-1}S^{-1})$, respectively. Then the variance of the ratio has an upper bound given by
\begin{equation}
\mathbf{Var}\left(\frac{\hat{k}_A(2,S)}{\hat{k}_A(1,S)}\right) \leq \left(\frac{\mathbb{E}(\hat{k}_A(2,S))}{\mathbb{E}(\hat{k}_A(1,S))}\right)^2\left[\frac{\mathbf{Var}( \hat{k}_A(2,S))}{(\mathbb{E}(\hat{k}_A(2,S)))^2}+\frac{\mathbf{Var} (\hat{k}_A(1,S))}{(\mathbb{E}(\hat{k}_A(1,S)))^2}\right] =  O\left(\frac{d^{8}}{S^{2}}\right)+O\left(\frac{d^{5}}{S}\right),
\end{equation}
where the two expectation values can be evaluated by $\mathbb{E}(\hat{k}_A(2,S)) = O(d^{-1})$ and $\mathbb{E}(\hat{k}_A(1,S)) = O(d^{-2})$. The detailed calculation of the variances and expectations can be found in Appendix~\ref{appendssc:var}. Thus, the number of samples $S$ only needs to take $O(\frac{d^5}{\epsilon^2})$ to evaluate the estimate within precision $\epsilon$.

\subsection{Nonlinear correlation via identical sequences}\label{sc:ncrandom}
The nonlinear correlation function defined in Eq.~\eqref{eq:independcorr} is connected with two independent pieces of sequences. In this section, we set the two gate sequences to be identical and study the corresponding second-order correlation function, where we find that the protocol estimating unitarity robustly presented in Ref.~\cite{Wallman2015coherence} is a specific instance of this protocol.

Specifically, we consider the second-order correlation function when the two random gate sequences are chosen to be the same, i.e., $\mathbf{g}^1=\mathbf{g}^2$, defining
\begin{equation}\label{eq:expec}
k_f(m)=\sum_{x,y\in\mathcal{X}}\mathbb{E}_{\mathbf{g}\in \mathbb{G}^{\times m}} f(x,y,\mathbf{g},m)p(x,\mathbf{g})p(y,\mathbf{g}).
\end{equation}
Similarly to the case for independent sequences, we assume
\begin{equation}\label{eq:groupfunction}
f(x,y,\mathbf{g},m)=f_A(x,y,\mathbf{g}, m)=\Tr(B_{xy}\tau(g_m)\prod_{i=1}^{m-1}A\tau(g_i)),
\end{equation}
where $\tau$ is an irreducible representation of the group $\mathbb{G}$ and $A$, $B_{xy}$ are matrices supported on the representation space associated with $\tau$. The following theorem states that the expectation value $k_A(m)=k_{f_A}(m)$ has an exponential decay, with proof shown in Appendix~\ref{appenssc:thm3proof}.

\begin{theorem}\label{th:exdec2}
(Exponential decay) Given a generalized UIRS protocol with respect to the group $\mathbb{G}$, the expectation value $k_A(m)$ defined with Eqs.~\eqref{eq:expec} and~\eqref{eq:groupfunction} has the following form:
\begin{equation}
    k_{A}(m)=\Tr(\Theta(\{E_x\},\rho)[\Phi(A,\Lambda)]^{m-1}),
\end{equation}
where $\Theta(\{E_x\},\rho)$ and $\Phi(A,\Lambda)$ are induced matrices dependent on SPAM and channel $\Lambda$, respectively. The matrix $\Phi(A,\Lambda)$ has the element
\begin{equation}\label{eq:decayrandom}
    [\Phi(A,\Lambda)]_{i,j}=\frac{1}{|P_j|}\Tr(P_i A^{T}P_{j} \Lambda^{\otimes 2}),
\end{equation}
where $P_i$ and $P_j$ are the projectors onto the $i$-th and $j$-th copies of $\tau$ inside $\omega^{\otimes 2}$, respectively.
In the special case that the representation $\omega^{\otimes 2}$ only has one copy of $\tau$, the matrix $\Phi(A,\Lambda)$ reduces into a number.
\end{theorem}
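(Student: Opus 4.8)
The plan is to follow the same template as the proof of Theorem~\ref{th:exdec1} and of the first-order UIRS decay; the genuinely new feature is that $p(x,\mathbf g)$ and $p(y,\mathbf g)$ are evaluated on the \emph{same} sequence, so each gate $g_i$ enters only through a single diagonal operator $\tau(g_i)\otimes\omega(g_i)^{\otimes 2}$, which is the source of both the $\Lambda^{\otimes 2}$ and of ``copies of $\tau$ inside $\omega^{\otimes 2}$'' in Eq.~\eqref{eq:decayrandom}. First I would put $f_A$ into sandwich form: since $B_{xy}$ is supported on the representation space of $\tau$, it is a linear combination of rank-one operators $\lketbra{\beta}{\gamma}$, and by linearity of $k_A$ in $B_{xy}$ I may assume $B_{xy}=\lketbra{\beta_{xy}}{\gamma_{xy}}$, so the cyclic trace in Eq.~\eqref{eq:groupfunction} becomes $f_A(x,y,\mathbf g,m)=\lbra{\gamma_{xy}}\tau(g_m)\prod_{i=1}^{m-1}A\,\tau(g_i)\lket{\beta_{xy}}$. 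In parallel, $p(x,\mathbf g)\,p(y,\mathbf g)=\lbra{\tilde E_x\otimes\tilde E_y}\,\Lambda_L^{\otimes2}\,\omega(g_m)^{\otimes2}\,\Lambda^{\otimes2}\,\omega(g_{m-1})^{\otimes2}\cdots\Lambda^{\otimes2}\,\omega(g_1)^{\otimes2}\,\Lambda_R^{\otimes2}\,\lket{\tilde\rho\otimes\tilde\rho}$, the two copies appearing because a product of two scalars is being formed. Absorbing $\Lambda_L$ into an effective measurement and $\Lambda_R$ into an effective state (so that $\Lambda=\Lambda_R\Lambda_L$ is the only remaining noise) and multiplying the two expressions, the summand of $k_A(m)$ becomes a single sandwich on $\mathcal V_\tau\otimes(\mathbb C^{d^2})^{\otimes2}$ whose dependence on each $g_i$ is only through $\tau(g_i)\otimes\omega(g_i)^{\otimes2}$, with the fixed operator $A\otimes\Lambda^{\otimes2}$ interleaved between consecutive layers.

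Next I would average layer by layer. Because $g_1,\dots,g_m$ are drawn independently and uniformly from $\mathbb G$, each layer independently contributes $M:=\mathbb E_g[\tau(g)\otimes\omega(g)^{\otimes2}]$, which by Schur's lemma is the orthogonal projector onto the space of intertwiners from $\tau$ to $\omega^{\otimes2}$. Choosing, for the $i$-th copy of $\tau$ inside $\omega^{\otimes2}$, an isometry $V_i$ with range projector $P_i$, the vectorized intertwiners $\mathrm{vec}(V_i)$ form an orthogonal basis of this space, so $M=\sum_i|P_i|^{-1}\lket{\mathrm{vec}(V_i)}\lbra{\mathrm{vec}(V_i)}$. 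Inserting the fixed operator $A\otimes\Lambda^{\otimes2}$ between two such copies of $M$ collapses it to the $n_\tau\times n_\tau$ transfer matrix ($n_\tau$ being the multiplicity of $\tau$ in $\omega^{\otimes2}$) with entries $[\Phi(A,\Lambda)]_{i,j}=|P_j|^{-1}\Tr(P_iA^TP_j\Lambda^{\otimes2})$, where the transpose on $A$ is a bookkeeping artifact of opening the cyclic trace of $f_A$ and pairing it against the $\omega^{\otimes2}$-structure of $p(x,\mathbf g)p(y,\mathbf g)$ (via the vectorization identity $(I\otimes X)\lket{\Omega}=(X^T\otimes I)\lket{\Omega}$). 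Composing the $m-1$ interior interleaving slots gives $[\Phi(A,\Lambda)]^{m-1}$, and the remaining pieces --- the boundary vectors built from the effective SPAM data and from $\beta_{xy},\gamma_{xy}$, together with the sum over $x,y$ --- assemble into a matrix $\Theta(\{E_x\},\rho)$ independent of $\Lambda$, giving $k_A(m)=\Tr(\Theta(\{E_x\},\rho)[\Phi(A,\Lambda)]^{m-1})$. If $\omega^{\otimes2}$ contains exactly one copy of $\tau$, then $n_\tau=1$, $\Phi$ is a scalar, and $k_A(m)$ is a single exponential in $m$.

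The main obstacle is the Schur-lemma bookkeeping in the second step: one must track exactly which tensor factors carry the diagonal $\mathbb G$-action, verify that the relevant irrep copies lie inside $\omega^{\otimes2}$ (rather than inside $\omega$, as for the first-order protocol, or inside $\omega^{\otimes4}$), and extract both the normalization $|P_j|^{-1}$ and the transpose $A^T$ correctly --- all of which is governed by how the cyclic trace defining $f_A$ is cut open and paired with $p(x,\mathbf g)p(y,\mathbf g)$. A secondary but essential check is that $\Theta$ genuinely depends only on the (effective) SPAM data and on $B_{xy}$; this is what lets the exponential fit isolate $\Phi(A,\Lambda)$ and hence yields robustness against SPAM errors, as in Ref.~\cite{helsen2021estimating}.
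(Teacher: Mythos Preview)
Your proposal is correct and follows essentially the same route as the paper's proof: combine $f_A$ and $p(x,\mathbf g)p(y,\mathbf g)$ into a single trace on $\mathcal V_\tau\otimes(\mathbb C^{d^2})^{\otimes 2}$, observe that each layer contributes $\tau(g_i)\otimes\omega(g_i)^{\otimes 2}$, average layer by layer to obtain the projector $P_\tau=\mathbb E_g[\tau(g)\otimes\omega(g)^{\otimes 2}]$, and read off $\Phi(A,\Lambda)$ from the compression of $P_\tau(A\otimes\Lambda^{\otimes 2})P_\tau$ to the multiplicity space. The only cosmetic difference is that you open the cyclic trace by decomposing $B_{xy}$ into rank-one pieces, whereas the paper keeps $B_{xy}$ intact and uses $\Tr(X)\Tr(Y)=\Tr(X\otimes Y)$ directly to merge the three scalars into one trace; your Schur-lemma discussion of the intertwiner basis and the origin of $A^T$ is more explicit than the paper's, which simply states the matrix element formula without derivation.
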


Below, we show that the protocol in this section can be used to estimate unitarity. For the between-gates noise channel $\Lambda:\mathcal{B}(\mathbb{C}^d)\rightarrow \mathcal{B}(\mathbb{C}^d)$, the unitarity of the channel is defined as the average purity of the output states with the identity part being subtracted,
\begin{equation}
   u (\Lambda)=\frac{d}{d-1} \int d\psi \Tr[\Lambda'(\ketbra{\psi})^{\dagger}\Lambda'(\ketbra{\psi})],
\end{equation}
where $\Lambda'(A)=\Lambda(A)-\frac{\Tr\Lambda(A)}{d}\mathbf{1}$.
%Assume the ideal initial state is $\rho$, and after a sequence of random unitary gate drawn from a Clifford set $\mathbb{G}$, the output state is measured by an ideal observable $E$. 

The unitarity can be estimated efficiently and robustly against SPAM errors through an experimental protocol based on randomized benchmarking \cite{Wallman2015coherence}. Here, we prove that unitarity can be estimated under the generalized UIRS protocol when the two involved random sequences are identical and the group $\mathbb{G}$ is the Clifford group. Again, we assume the noise channel of the Clifford gates to be gate-independent.

Given a state, $\rho$, a sequence of gates $\{g_1, g_2, \cdots, g_m\}$, and an observable, $E = \sum_{x} E_{x}$ with $\{E_x\}_{x}$ representing a computational-basis measurement, it has been proven that the square of the measurement result has an exponential decay after taking an expectation over the Clifford group $\mathbb{G}$ \cite{Wallman2015coherence},
\begin{equation}\label{eq:unitarity}
\underset{\mathbf{g}\in \mathbb{G}^{\times m}}{\mathbb{E}}\Big[\Tr (\tilde{E} \prod_{i=1}^{m}\phi(g_i)[\tilde{\rho}])\Big]^2=a+b u(\Lambda)^{m-1},
\end{equation}
where $a$ and $b$ are fitting constants and $u(\Lambda)$ is the unitarity of the noise channel. The notation $\widetilde{\cdot}$ represents the noisy versions of quantum states, gates, and observables.

We demonstrate that Eq.~\eqref{eq:unitarity} can be written in the form of Eq.~\eqref{eq:expec} by properly selecting the operator $A$ and $B_{xy}$. Let $A=(\sum_{\sigma\in \mathbb{P}_n} \lketbra{\sigma}{\sigma})^{\otimes 2}$ as the identity in the double representation $\omega^{\otimes 2}$ and $B_{xy}= \lket{\mathbf{1}}\lbra{E_x}\otimes \lket{\mathbf{1}}\lbra{E_y}$, and set the irreducible representation as the trivial 
representation of the Clifford group. Then, the second-order correlation function is 
\begin{equation}
f_A(x,y,\mathbf{g}, m)=\Tr( \lket{\mathbf{1}}\lbra{E_x}\otimes \lket{\mathbf{1}}\lbra{E_y}\prod_{i=1}^{m}\tau_{tr}(g_i))=\Tr(E_x)\Tr(E_y)=1.
\end{equation}
Therefore,
\begin{equation}
\begin{split}
k_A(m) &= \underset{\mathbf{g}\in \mathbb{G}^{\times m}}{\mathbb{E}}\sum_{x,y} f_A(x,y,\mathbf{g},m)p(x,\mathbf{g})p(y,\mathbf{g})\\
&= \underset{\mathbf{g}\in \mathbb{G}^{\times m}}{\mathbb{E}}\sum_{x,y} \Tr(\tilde{E}(x)\prod_{i=1}^{m}\phi(g_i)[\tilde{\rho}])\Tr(\tilde{E}(y)\prod_{i=1}^{m}\phi(g_i)[\tilde{\rho}])\\
&=\underset{\mathbf{g}\in \mathbb{G}^{\times m}}{\mathbb{E}}\left[\Tr(\tilde{E} \prod_{i=1}^{m}\phi(g_i)[\tilde{\rho}]) \right]^2.
\end{split}
\end{equation}

According to Theorem~\ref{th:exdec2}, the quantity $k_A(m)$ has an exponential decay with respect to the matrix $\Phi(A,\Lambda)=P_{\tau}(A\otimes \Lambda^{\otimes 2})P_{\tau}$ where $P_{\tau}$ is the projector onto the trivial subspace within $\omega^{\otimes 2}$. The explicit form of the projector is
\begin{equation}\label{eq:ptau}
P_{\tau}=\lketbra{B_1\otimes B_1}{B_1\otimes B_1}+\lketbra{B_1\otimes B_2}{B_1\otimes B_2},
\end{equation}
where the two operators are
\begin{equation}\label{eq:B2}
\begin{split}
    % P_{1}&=\lketbra{B_1}{B_1},P_2=\lketbra{B_2}{B_2},\\
     B_1&=\frac{\mathbf{1}_{d^2}}{d}, B_2=\frac{F-B_1}{\sqrt{d^2-1}}.
\end{split}
\end{equation}
Here, $F$ is a SWAP operator on the space of $\Lambda^{\otimes 2}$. When $\Lambda$ is trace-preserving, 
\begin{equation}
    \Phi(A,\Lambda)=\begin{pmatrix} \lbra{B_1}\Lambda^{\otimes 2}\lket{B_1} & \lbra{B_1}\Lambda^{\otimes 2}\lket{B_2} \\ \lbra{B_2}\Lambda^{\otimes 2}\lket{B_1} & \lbra{B_2}\Lambda^{\otimes 2}\lket{B_2} \end{pmatrix}=\begin{pmatrix} 
    1 & \lbra{B_1}\Lambda^{\otimes 2}\lket{B_2} \\ 0 & u(\lambda) \end{pmatrix}
\end{equation}
is a $2\times 2$ matrix. And the two eigenvalues of matrix $\Phi$ are $1$ and $u(\lambda)$. Thus, there exist constants $a$ and $b$ such that
\begin{equation}\label{eq:unifit}
k_A(m)=a+b u(\Lambda)^{m-1}.
\end{equation}

For different $m$, estimate the expectation values $k_A(m)$ and fit these values to Eq.~\eqref{eq:unifit}. Thus, we can estimate the unitarity robustly and efficiently only with a single exponential fitting.

\subsection{Measurable functions via UIRS}\label{ssc:measure}
In Sections~\ref{sc:ncshadow} and~\ref{sc:ncrandom}, we have derived the exponential forms of the expectation values as shown in Theorems~\ref{th:exdec1} and~\ref{th:exdec2}. We refer to the two different protocols as independent UIRS and identical UIRS. It is worth mentioning that in our protocol, only functions in the form of Eq.~\eqref{eq:decayshadow} can be directly measured with independent UIRS, and only functions in the form of Eq.~\eqref{eq:decayrandom} can be directly measured with identical UIRS. Due to the existence of the projectors in Eqs.~\eqref{eq:decayshadow} and~\eqref{eq:decayrandom}, the nonlinear function $\Tr(A^T\Lambda^{\otimes 2})$ can only be measured when $A$ is inside the span of projectors. Also, owing to the different classical postprocessing procedures of the independent UIRS protocol and the identical UIRS protocol, the projectors in Eqs.~\eqref{eq:decayshadow} and~\eqref{eq:decayrandom} differ from each other. There might exist some operators that can only be measured by the independent UIRS and some other operators that can only be measured by the identical UIRS, which depends on the choice of the gate set $\mathbb{G}$. Nonetheless, if we assume the matrix exponential fitting is feasible and the random gates are taken from the $n$-qubit Clifford group, we can show that any operators that identical UIRS can measure can also be measured via independent UIRS. Below, we delve into this discussion in detail.

From Eq.~\eqref{eq:decayshadow}, when fixing two irreducible representation $\tau$ and $\tau'$ and a matrix $A$, one can obtain the value of $\Tr((P_i\otimes P_{i'})A^{T}(P_{j}\otimes P_{j'})\Lambda^{\otimes 2})$, where $P_i$ and $P_j$ are projectors onto the irreducible representation of $\tau$ inside the Liouville representation $\omega$ and $P_{i'}$ and $P_{j'}$ are projectors onto the irreducible representation of $\tau'$ inside $\omega$. Thus, independent UIRS can measure observables in the span of $\{P_{\tau} \otimes P_{\tau'}\}$ where $P_{\tau}$ and $P_{\tau'}$ are projectors onto the irreducible representation space associated with $\tau$ and $\tau'$ in $\omega$, respectively. $\tau$ and $\tau'$ are arbitrary irreducible representations of group $\mathbb{G}$. If $\omega$ contains several copies of $\tau$, then $P_{\tau} = \sum_i P_i$ is the sum of the projectors onto each copy of $\tau$ where $P_i$ is the projector onto the $i$-th copy of $\tau$ inside the representation $\omega$. We denote the span of the projectors in this case as $\mathcal{S}^{\mathbb{G}}_1 = \text{span}\{P_{\tau} \otimes P_{\tau'}\}$.

Similarly, from Eq.~\eqref{eq:decayrandom}, identical UIRS can only measure observables in the span of $\{P_{\nu}\}$ where $P_{\nu}$ is a projector onto the irreducible representation space associated with $\nu$ in $\omega^{\otimes 2}$. $\nu$ is an irreducible representations of group $\mathbb{G}$. We denote the span of the projectors in this case as $\mathcal{S}^{\mathbb{G}}_2 = \text{span}\{P_{\nu}\}$.

Note that $\mathcal{S}^{\mathbb{G}}_1$ and $\mathcal{S}^{\mathbb{G}}_2$ are related to the group $\mathbb{G}$ and would be larger if $\mathbb{G}$ is larger. Below, we consider the case that $\mathbb{G}$ is the $n$-qubit Clifford group. In this case,
\begin{align}
\label{eq:projshadow}\mathcal{S}^{\mathbb{G}}_1 &= \text{span}\{P_{tr} \otimes P_{tr}, P_{tr} \otimes P_{ad}, P_{ad} \otimes P_{tr}, P_{ad} \otimes P_{ad}\};\\
\label{eq:projrandom}\mathcal{S}^{\mathbb{G}}_2 &= \text{span}\{P_{d}, P_{id}, P_{r}, P_{l}, P_{[S]}, P_{\{S\}}, P_{[A]}, P_{\{A\}}\}.
\end{align}
Here, $P_{tr}$ and $P_{ad}$ are two distinct irreducible representations of the Clifford group in $\omega$ as introduced in Section~\ref{ssc:Clifford}. $P_{d}, P_{id}, P_{r}, P_{l}, P_{[S]}, P_{\{S\}}, P_{[A]}$, and $P_{\{A\}}$ are eight different irreducible representations of the Clifford group in $\omega^{\otimes 2}$~\cite{Helsen2018Representations}. Then, the operators that can be measured via the independent UIRS and the identical UIRS are determined by Eqs.~\eqref{eq:projshadow} and~\eqref{eq:projrandom}, respectively.

For OTOC, the first example in this work, the observable $A$ is $(d^2-1)^2\sum_{\sigma\in \mathbb{P}_n^0} \Tr(W \sigma W\sigma) \lketbra{\sigma\otimes \sigma}{V^{\dagger}\otimes V}$ satisfying $A = P_{ad}^{\otimes 2}AP_{ad}^{\otimes 2}$. Meanwhile, $A = P_dAP_d$. Thus, OTOC can be measured by both the independent UIRS and the identical UIRS. But the difference is that the former only requires the single-exponential fitting, and the latter requires the matrix-exponential fitting as $P_d$ contains multiple copies of an irreducible representation of $\mathbb{G}$. For unitarity, as discussed in the previous section, this quantity can be measured via the identical UIRS. Meanwhile, the unitarity is equal to $\lbra{B_2}\Lambda^{\otimes 2}\lket{B_2}$ with $B_2$ defined in Eq.~\eqref{eq:B2} and corresponds to the observable of $\lketbra{B_2}{B_2}$. Note that $\lketbra{B_2}{B_2}$ is inside $P_{ad} \otimes P_{ad}$ so unitarity can be also measured via the independent UIRS. The difference between the two protocols is the classical postprocessing and the sample complexity.

Though the two examples in this work can be measured via both two protocols, there exist examples that can only be measured via independent UIRS. For instance, the observable $P_{ad}^{\otimes 2}$ can only be measured via the independent UIRS, as it does not belong to $\mathcal{S}^{\mathbb{G}}_2$. Conversely, we found that there is no example that can only be measured via identical UIRS for $n$-qubit Clifford group as $\mathcal{S}_2^{\mathbb{G}} \subseteq \mathcal{S}_1^{\mathbb{G}}$. This can also be easily seen from the following equations.
\begin{equation}
    \begin{split}
P_{id} &= P_{tr}\otimes P_{tr}\\
P_l &= P_{ad}\otimes P_{tr}\\
P_r &= P_{tr}\otimes P_{ad}\\
P_{d} &= P_{d} \cdot (P_{ad}\otimes P_{ad})\\
P_{[S]} &= P_{[S]} \cdot (P_{ad}\otimes P_{ad})\\
P_{\{S\}} &= P_{\{S\}} \cdot (P_{ad}\otimes P_{ad})\\
P_{[A]} &= P_{[A]} \cdot (P_{ad}\otimes P_{ad})\\
P_{\{A\}} &= P_{\{A\}} \cdot (P_{ad}\otimes P_{ad}).
    \end{split}
\end{equation}
Thus, for the $n$-qubit Clifford group, $\mathcal{S}_2^{\mathbb{G}}\subset \mathcal{S}_1^{\mathbb{G}}$ and we conclude that the observables that independent UIRS can measure contain the observables that identical UIRS can measure. This property comes from the fact the Liouville representation only contains a trivial representation and another irreducible representation for the Clifford group. Due to the same reason, the property that independent UIRS can measure more observables than identical UIRS holds for any unitary 2-design group~\cite{unitarydesignDankert2009}. Nonetheless, this phenomenon does not hold for any group. For the Pauli group, identical UIRS can measure more than independent UIRS, as $\mathcal{S}_1^{\mathbb{G}}\subset \mathcal{S}_2^{\mathbb{G}}$ in this case.

\section{Simulation results and analysis}\label{sc:simulation}
In this part, we present the simulation results leveraging the generalized UIRS protocol, primarily focusing on the estimation of OTOC. OTOC is defined with a Hamiltonian evolution and two observables $V$ and $W$. In our simulation, the evolution is a unitary dynamics, $U_t = e^{-iHt}$, where $H$ is a disordered Ising interaction~\cite{Ising2004},
\begin{equation}
H_{\text{disordered Ising}}=\sum\limits_{i<j}J_{i,j}\sigma_i^x\sigma_j^x+\frac B2\sum\limits_i\sigma_i^z+\sum\limits_i\frac{D_i}2\sigma_i^z,
\end{equation}
where $J_{i,j}=\frac{J_0}{|i-j|^\alpha}$, $D_i$ is uniformly and randomly chosen from $[-D_{\mathrm{max}},D_{\mathrm{max}}]$. And we set the random Clifford gates to be noiseless in the simulation. The observables $V$ and $W$ are set as Pauli operators with $\sigma_y$ on the last qubit and $\sigma_x$ on the last but one qubit, respectively. Moreover, the POVM $\{E_x\}$ is chosen as the computational-basis measurement and the initial state $\rho$ is assigned as $\ketbra{0}$.

As introduced in Eq.~\eqref{eq:otoccor}, to evaluate OTOC, we first calculate the correlation function $f_A(x,y,\mathbf{g}^1,\mathbf{g}^2, m)$ by sampling two sequences of Clifford gates with length $m$, denoted as $(\mathbf{g}^1,\mathbf{g}^2)$. To reduce the computational difficulties, we simplify the expression of Eq.~\eqref{eq:otoccor} with results shown in Appendix~\ref{appendixssc:formula}. Then we simulate Eq.~\eqref{eq:expcor} by summing over all measurement results $x$ and $y$ to obtain an estimator $\hat k_A(m,\mathbf{g}^1,\mathbf{g}^2)=\sum_{x,y}f_A(x,y,\mathbf{g}^1,\mathbf{g}^2,m)p(x,\mathbf{g}^1)p(y,\mathbf{g}^2)$. Then, we repeat sampling $(\mathbf{g}^1,\mathbf{g}^2)$ for $S$ times to obtain the expectation, which we denote as $\hat k_A(m,S)$.

As we propose, the value of OTOC is related to the decay of $k_A(m)$. Here, we take $m$ to be $1$ and $2$ and employ the ratio of $\frac{k_A(2)}{k_A(1)}$ to derive the OTOC estimate. From the simulation process above, we obtain one OTOC estimate $x_1(S)=\frac{\hat{k}_A(2,S)}{\hat{k}_A(1,S)}$. Subsequently, through the repetition of this process, we accumulate a series of estimates $\{x_1(S),\cdots,x_N(S)\}$. The ultimate OTOC estimate is then defined as $\bar{x}(S)=\frac{1}{N}\sum_{i=1}^Nx_i(S)$. We illustrate the estimated OTOC for 3, 4, and 5 qubit systems in Fig.~\ref{fig:3q}, Fig.~\ref{fig:4q}, and Fig.~\ref{fig:5q}. The results show that our protocol can estimate OTOC accurately. Moreover, if we denote the variance of $\bar x(S)$ as $s^2$ and the variance of $x_i(S)$ as $\sigma^2$, then we have $s^2=\frac{\sigma^2}{N}$. We show the variance of the estimate $\bar{x}(S)$ with respect to the sampling sequence number $S$ for different qubit systems in Fig.~\ref{fig:var}.

\begin{figure}[htbp]
    \centering
    \subfigure[OTOC for 3 qubits]{\includegraphics[width=8cm]{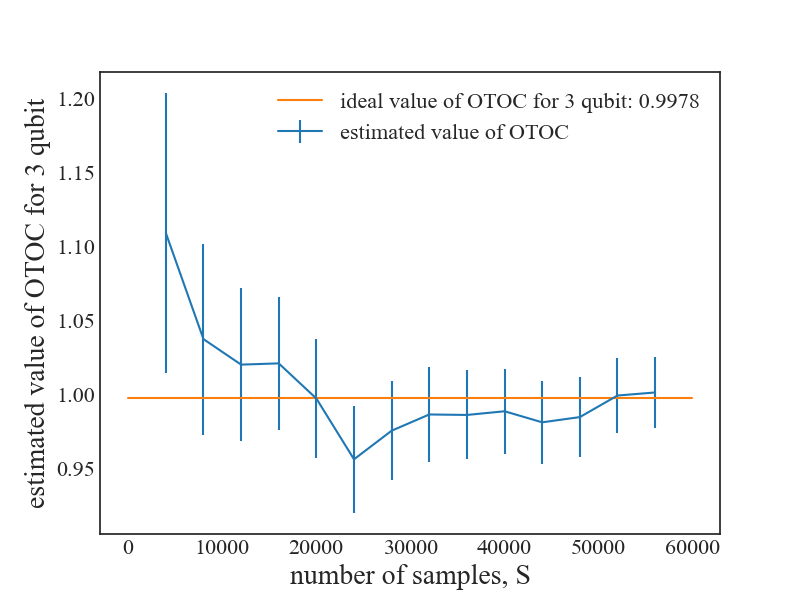}\label{fig:3q}}
    \subfigure[OTOC for 4 qubits]{\includegraphics[width=8cm]{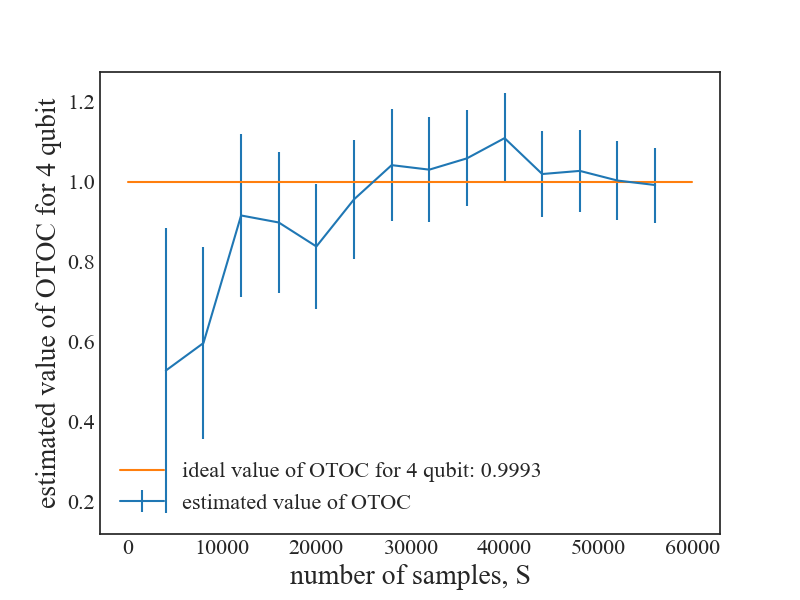}\label{fig:4q}}
    \subfigure[OTOC for 5 qubits]{\includegraphics[width=8cm]{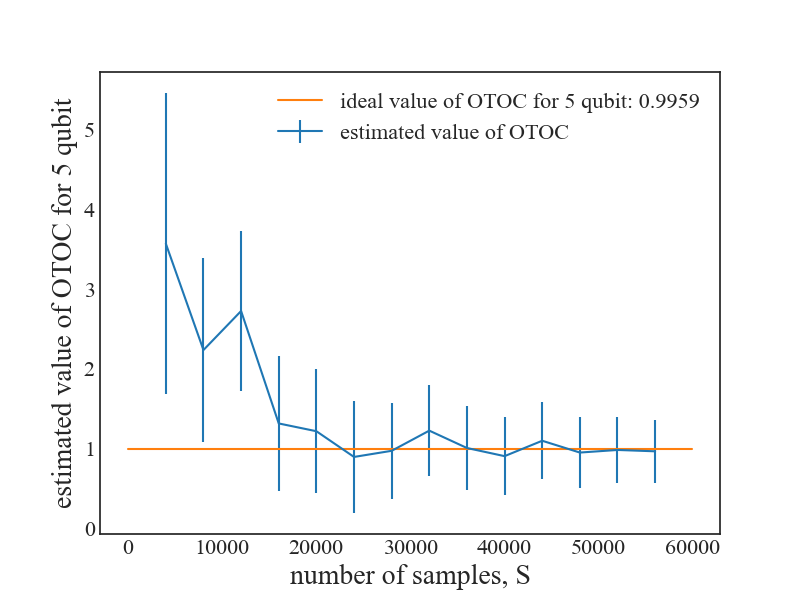}\label{fig:5q}}   
    \subfigure[Variance for 2, 3, 4, and 5 qubit]{\includegraphics[width=8cm]{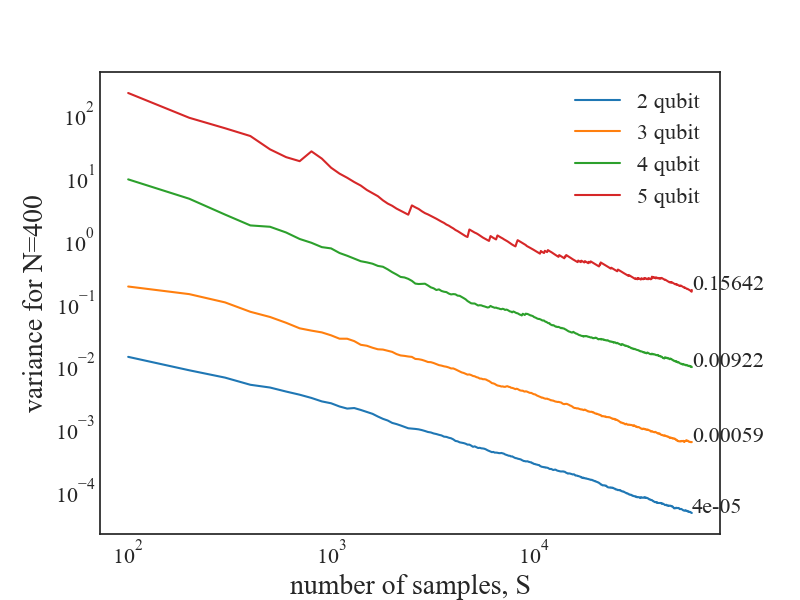}\label{fig:var}}
    \caption{Convergence of estimated OTOC for (a) 3 qubits, (b) 4 qubits, and (c) 5 qubits.  In the figures, the horizontal axis corresponds to $S$, the number of sampling sequences in $\hat k_A(m,S)$. The orange horizontal line denotes the theoretical OTOC value in the ideal case that we want to obtain. And the blue line represents the average $\bar x$ over $\{x_1,\cdots,x_N\}$, for $N=400$ and $S_{max}=60000$ here.  Moreover, the blue vertical line represents the standard deviation for $\bar x(S)$. Due to the substantial size of the Clifford group, especially for qubit numbers beyond 2, exhaustive sampling from the entire Clifford group proves challenging. Instead, we opt to randomly select a subgroup from the $n$-qubit Clifford group and sample from this subgroup for simulation purposes. Moreover, owing to the extensive size of the Clifford group, the estimate $x_i(S)$ exhibits considerable variation, and $\bar x(S)$ deviates from the ideal OTOC when the number of samples $S$ is relatively small. As the sample size $S$ increases, the estimated result gradually converges towards the ideal value.
    (d) In this figure, we illustrate the variance of $\bar x(S)$. As $\bar x(S)$ is the mean value of $\{x_1(S),\cdots,x_N(S)\}$, the relationship between their variance is that $\mathrm{Var}(\bar x(S))=\frac{\mathrm{Var}(x_i(S))}{N}$, where $N=400$ is the size of the estimates. In this figure, the horizontal axis is the logarithm of the sample number $S$, and the vertical axis represents the logarithm of the variance for different qubits.}
    % \lzh{can we remove a and b?} }
    % \label{fig:otoc}
\end{figure}

% Meanwhile, we fit the variance for $S=60000$ with polynomial $\sum_{i=0}^5 a_id^i$ with $d=2^n$ in Fig.~\ref{fig:var_b}. Here, $n$ is the qubit number. As shown in the figures, the results match our conclusion of the variance. \red{Refer to the theoretical derivation.}

% \subsection{Other analysis and comparison}
Furthermore, since OTOC depends on the evolution time, we demonstrate how the estimated OTOC changes with time. We sample ten timestamps with equal intervals and obtain the estimated OTOC for each time, as shown in Fig.~\ref{fig:time}. The results underscore that the estimate captures the temporal evolution of the OTOC, thereby presenting a valuable tool for investigating quantum scrambling phenomena within quantum many-body systems.

As we proposed, our method is robust to SPAM error, and we compare it with another method using statistical correlation in~\cite{Vermersch_2019}, as illustrated in Fig.~\ref{fig:compare}. The statistical correlation method consists of applying a global unitary to an arbitrary state and then separately measuring $\langle W(t)\rangle$ and $\langle V^{\dagger}W(t)V\rangle$ to obtain the OTOC.
For the SPAM error, we introduce the depolarizing channel after state preparation and before measurement. For density matrix $\rho$ and error probability $p$, the state undergoes a transformation to $(1-p)\rho+\frac{p}{d} \mathbf{1}$ after the depolarizing channel. The simulation result highlights the advantageous performance of our protocol, particularly when 
 SPAM errors are large.

\begin{figure}
    \centering
    \subfigure[OTOC with time]{\label{fig:time}\includegraphics[width=8cm]{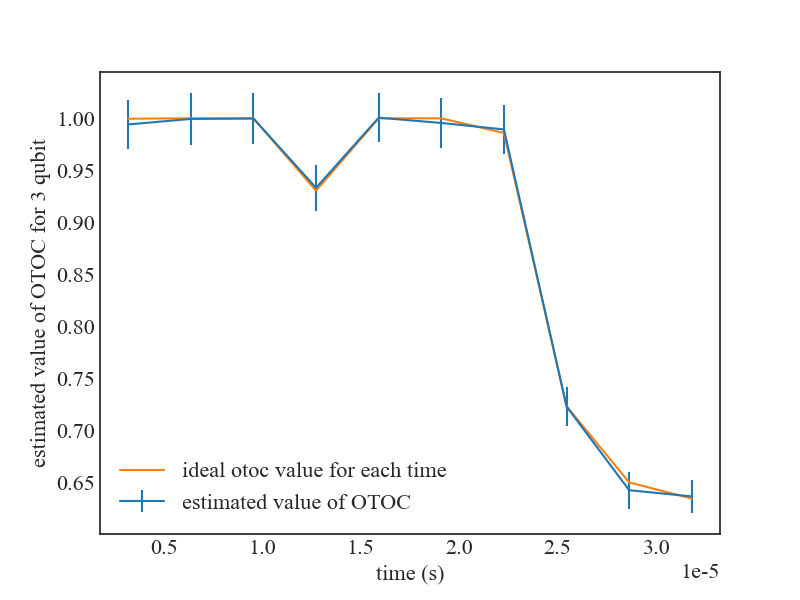}}
    \subfigure[Compared OTOC value under SPAM error]{\label{fig:compare}\includegraphics[width=8cm]{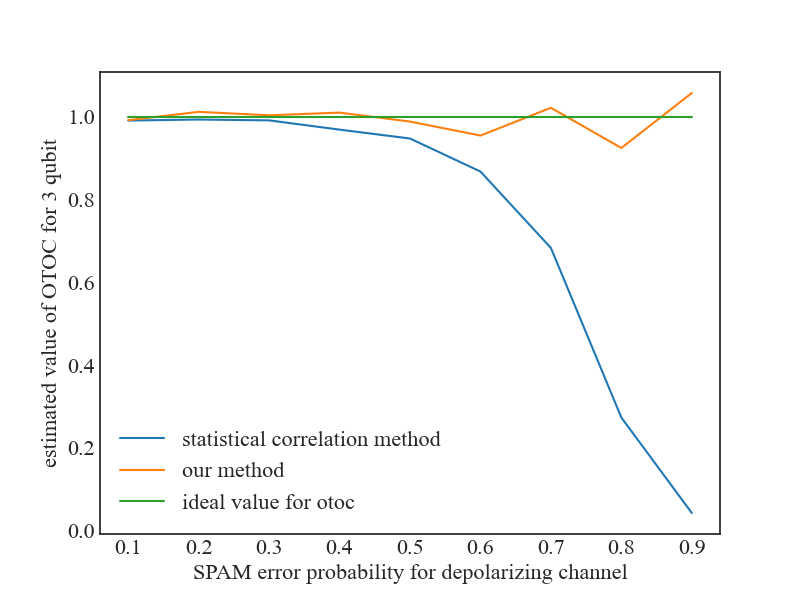}}
    \caption{(a) In this figure, the horizontal axis is the time, and the orange line is the ideal OTOC value. The blue line is the mean value $\bar x(S)$ for $N=400, S=30000$, and the standard deviation of $\bar x(S)$ is shown as the error bar. (b) In this figure, the horizontal axis is the probability $p$ in the depolarizing channel, after which the state $\rho$ undergoes the transformation $(1-p)\rho+\frac{p}{d}\mathbf{1}$; the green line is the ideal OTOC value; the orange line is the estimated value of OTOC from our method; the blue line is the estimated OTOC from the statistical correlation method~\cite{Vermersch_2019}.} 
\end{figure}

\section{Conclusion}\label{sc:conclusion}
In this work, we propose two generalized UIRS protocols that can be used to estimate non-linear channel properties robust to SPAM errors. In these protocols, we collect shadow data by applying random gate sequences and POVM measurement and calculate the expectation of the correlation function with shadow data from independent or identical gate sequences. Then, the expectation exhibits an exponential decay against the circuit depth, from which we can extract the nonlinear property of quantum channels. 
We show the application of our protocols in evaluating OTOC, an important quantity in quantum many-body systems. From the simulation results, we demonstrate the efficience of our protocol and robustness against SPAM errors. 
As OTOC is directly related to quantum magic~\cite{Nonstabilizerness2023Leone}, our protocol can be used to measure this essential resource in universal quantum computing. Besides OTOC, we also demonstrate the application of our protocol in estimating other properties like unitarity, which characterizes the incoherence of quantum channels.

Note that in this work, we only study second-order correlation functions. The whole protocol can be generalized to the case using three or more pieces of shadow data and be used to investigate higher-order channel properties.
Moreover, the examples in our work mainly focus on the $n$-qubit Clifford group. One can investigate the properties derived with groups $\mathbb{G}$ other than the $n$-qubit Clifford group and explore the possibility that $\mathbb{G}$ is not a group.

\begin{acknowledgements}
The authors would like to thank Wenjun Yu and Pei Zeng for the helpful discussions. This work was supported by the National Natural Science Foundation of China Grant No.~12174216 and the Innovation Program for Quantum Science and Technology Grant No.~2021ZD0300804.
\end{acknowledgements}

\clearpage

\appendix
% \section{Technical Proof}
\section{Proof of Theorem \ref{th:exdec1}}\label{appendssc:thm1pf}
{\bf Theorem 1} {\it (Exponential decay) Given a generalized UIRS protocol with group $\mathbb{G}$, the expectation value $k_A(m)$ of the second-order correlation function given by Eq.~\eqref{eq:irrtyp} has the following form:
\begin{equation}
    k_A(m)=\Tr(\Theta(\{E_x\},\rho)[\Phi(A,\Lambda)]^{m-1}),
\end{equation}
where $\Theta(\{E_x\},\rho)$ and $\Phi(A,\Lambda)$ are induced matrices dependent on SPAM and channel $\Lambda$, respectively. 
If $\omega(g)=\tau_1(g)^{n_{\tau1}}\oplus \omega'_1(g)$ where $\omega'_1(g)$ contains no copy of $\tau_1(g)$ and $\omega(g)=\tau_2(g)^{n_{\tau2}}\oplus \omega'_2(g)$ where $\omega'_2(g)$ contains no copy of $\tau_2(g)$, then the matrix $\Phi(A,\Lambda)$ has the element
\begin{equation}\label{eq:decayshadow}
[\Phi(A,\Lambda)]_{ii',jj'}=\frac{1}{|P_j||P_{j'}|}\Tr((P_i\otimes P_{i'})A^{T}(P_{j}\otimes P_{j'})\Lambda^{\otimes 2}),
\end{equation}
where $P_i$ and $P_j$ are the projectors onto the $i$-th and $j$-th copies of $\tau_1$ inside the representation $\omega$, respectively, $P_{i'}$ and $P_{j'}$ are the projectors onto the $i'$-th and $j'$-th copies of $\tau_2$ inside the representation $\omega$, respectively. In the special case that the representation $\omega$ only has one copy of $\tau_1$ and one copy of $\tau_2$, the matrix $\Phi(A,\Lambda)$ reduces to a number.}%}
\begin{proof}
\begin{equation}
\begin{split}
k_A(m)&=\sum_{x,y\in\mathcal{X}}\underset{\mathbf{g}^{1},\mathbf{g}^{2}\in \mathbb{G}^{\times m}}{\mathbb{E}}\Tr(B_{xy}[\tau_1(g^{1}_{m})\otimes \tau_2(g^2_m)] \prod_{i=1}^{m-1} A[\tau_1(g^1_i)\otimes \tau_2(g^2_i)]) \lbra{\tilde{E_x}}\prod_{i=1}^{m}\Lambda_L\omega(g^1_i)\Lambda_R\lket{\tilde{\rho}}
\lbra{\tilde{E_y}}\prod_{i=1}^{m}\Lambda_L\omega(g^2_i)\Lambda_R\lket{\tilde{\rho}}\\
&=\sum_{x,y\in\mathcal{X}} \underset{\mathbf{g}^{1},\mathbf{g}^{2}\in \mathbb{G}^{\times m}}{\mathbb{E}}\Tr  \Big((B_{xy}\otimes \lketbra{\Lambda_R(\tilde{\rho})}{\Lambda_L^*(\tilde{E_x})}\otimes \lketbra{\Lambda_R(\tilde{\rho})}{\Lambda_L^*(\tilde{E_y})})\tau_1(g^1_m)\otimes \tau_2(g^2_m)\otimes \omega(g^1_m)\otimes \omega(g^2_m)\times\\
&\prod_{i=1}^{m-1}(A\otimes\Lambda^{\otimes 2})\tau_1(g^1_i)\otimes \tau_2(g^2_i)\otimes \omega(g^1_i)\otimes \omega(g^2_i)\Big)\\
%&=\sum_{x,y\in\mathcal{X}}\Tr \Big((B_{xy} \otimes \lketbra{\Lambda_R(\tilde{\rho})}{\Lambda_L^{*}(\tilde{E_x})}\otimes \lketbra{\Lambda_R(\tilde{\rho})}{\Lambda_L^*(\tilde{E_y})})\Big(\underset{g^{1},g^{2}\in \mathbb{G}}{\mathbb{E}}\tau_1(g_1)\otimes \tau_2(g_2)\otimes \omega(g_1)\otimes \omega(g_2)\\
%&(A\otimes \Lambda^{\otimes 2})\Big)^{m-1} \Big)\\
&=\sum_{x,y\in\mathcal{X}}\Tr((B_{xy} \otimes \lketbra{\Lambda_R(\tilde{\rho})}{\Lambda_L^{*}(\tilde{E_x})}\otimes \lketbra{\Lambda_R(\tilde{\rho})}{\Lambda_L^*(\tilde{E_y})})\Big(P_{\tau_1\otimes\tau_2}(A\otimes \Lambda^{\otimes 2})P_{\tau_1\otimes\tau_2}\Big)^{m-1}),
\end{split}
\end{equation}
where $P_{\tau_1\otimes\tau_2}=P_{\tau_1}\otimes P_{\tau_2}=\underset{g_{1}\in \mathbb{G}}{\mathbb{E}}(\tau_1(g_1)\otimes \omega(g_1)) \otimes\underset{g_{2}\in \mathbb{G}}{\mathbb{E}} (\tau_2(g_2)\otimes \omega(g_2))$ is the corresponding representation average projector. If $\omega(g)=\tau_1(g)^{n_{\tau 1}}\oplus \omega'_1(g)$ and $\omega(g)=\tau_2(g)^{n_{\tau 2}}\oplus \omega'_2(g)$, the rank of $P_{\tau_1\otimes\tau_2}$ is $n_{\tau 1}n_{\tau 2}$ and $P_{\tau_1\otimes\tau_2}(A\otimes \Lambda^{\otimes 2})P_{\tau_1\otimes\tau_2}$ can be viewed as a matrix $\Phi$ with dimension $n_{\tau 1}n_{\tau 2}\times n_{\tau 1}n_{\tau 2}$. The element of $\Phi$ is given by
\begin{equation}
\Phi_{i,i',j,j'}=\frac{1}{|P_j||P_{j'}|}\Tr((P_i\otimes P_{i'})A^{T}(P_{j}\otimes P_{j'})\Lambda^{\otimes 2})
\end{equation}
where $P_i$ and $P_j$ are the projectors onto the $i$-th and $j$-th copies of $\tau_1$ inside the representation $\omega$. $P_{i'}$ and $P_{j'}$ are the projectors onto the $i'$-th and $j'$-th copies of $\tau_2$ inside the representation $\omega$. In particular, if $n_{\tau1}=n_{\tau2}=1$, then the matrix $\Phi(A,\Lambda)$ degenerates into a real number, $\Tr(P_{\tau_1\otimes\tau_2} A^T P_{\tau_1\otimes\tau_2} \Lambda^{\otimes 2})/(|P_{\tau_1}||P_{\tau_2}|)$, with $P_{\tau_1}$ and $P_{\tau_2}$ being the projectors onto the image of $\tau_1$ and $\tau_2$ inside $\omega$, respectively.
\end{proof}

\section{Proof of Corollary~\ref{th:exdec3}}\label{appendssc:thm2proof}
{\bf Corollary 1.} {\it The expectation value $k_A(m)$ of the second-order correlation function defined in Eq.~\eqref{eq:otoccor} has an exponential decay behaviour with respect to the sequence length $m$, $k_A(m)=a (dO(t))^{m-1}$. The decay parameter is $p(A)=dO(t)$ with $O(t)=\frac{1}{d}\Tr(W^{\dagger}V(t)^{\dagger}WV(t))$ being OTOC.}

\begin{proof}
From Theorem~\ref{th:exdec1}, the function $k_A(m)$ has an exponential decay with decay parameter as 
$p(A) = \frac{\Tr(P_{ad}^{\otimes 2}A^{T} P_{ad}^{\otimes 2}\mathcal{U}_t^{\otimes 2})}{|P_{ad}|^2}$ and $P_{ad}$ is the corresponding projector onto the representation $\tau_{ad}$ inside $\omega$. Hence, we only need to prove that $p(A) = dO(t)$.
%\red{Seems that it should be $A^T$ instead of $A$...} 

The projector $P_{ad}= \sum_{\sigma\in \mathbb{P}_n^0}\lketbra{\sigma}{\sigma}$ and $|P_{ad}|=d^2-1$. $P_{ad}^{\otimes 2}= \sum_{\sigma_1,\sigma_2 \in \mathbb{P}_n^0}\lketbra{\sigma_1\otimes \sigma_2}{\sigma_1\otimes \sigma_2}$. For two Pauli operators, there is $\Tr(\sigma_1\sigma_2)=\delta_{1,2}$. Note that $A^T=(d^2-1)^2\sum_{\sigma\in \mathbb{P}_n^0}\Tr(W \sigma W\sigma)\lketbra{V \otimes V}{\sigma\otimes \sigma}$ where $W$ and $V$ are both non-identity Pauli operators. Thus, operator $A^T$ is supported on the projection space of $P_{ad}^{\otimes2}$,
\begin{equation}
P_{ad}^{\otimes 2} A^T P_{ad}^{\otimes 2}=A^T.
\end{equation}
The decay parameter now becomes $p(A)=\frac{1}{(d^2-1)^2}\Tr(A^T\mathcal{U}_t^{\otimes 2})$. Before calculating $p(A)$, we first prove that in the expansion, $\lket{F(W \otimes W)}=\sum_{\sigma\in \mathbb{P}_n}\Tr(W \sigma W\sigma)\lket{\sigma \otimes\sigma}$ with $F$ being the swap operator.
In fact,
\begin{equation}
\begin{split}
\lbraket{\sigma_1 \otimes\sigma_2}{F(W\otimes W)}&=\Tr((\sigma_{1} \otimes\sigma_2)F(W\otimes W))\\
&=\Tr(F(W \sigma_1 \otimes W\sigma_2))\\
&=\Tr(W \sigma_1 W\sigma_2)\\
&=\Tr(W \sigma_1 W\sigma_1)\delta_{1,2}
%&=(-1)^{\langle W,\sigma_1\rangle}\delta_{1,2}.
\end{split}
\end{equation}
Since $F$ and $W$ are Hermitian matrices, the operator $A^T$ can be written as 
\begin{equation}
    A^T=(d^2-1)^2\left(\lketbra{V\otimes V}{F(W\otimes W)}-\lketbra{V\otimes V}{\mathbf{1}/\sqrt{d}\otimes\mathbf{1}/\sqrt{d}}\right).
\end{equation}
Therefore,
\begin{equation}
\begin{split}
\frac{\Tr(A^T\mathcal{U}_t^{\otimes 2})}{(d^2-1)^2}&=\Tr(\lketbra{V\otimes V}{F(W\otimes W)}\mathcal{U}_t^{\otimes 2})-\Tr(\lketbra{V\otimes V}{\mathbf{1}\otimes\mathbf{1}}\mathcal{U}_t^{\otimes 2})/d\\
 &=\lbraket{F(W\otimes W)}{\mathcal{U}_t (V)\otimes \mathcal{U}_t (V)}-\lbraket{\mathbf{1}\otimes\mathbf{1}}{\mathcal{U}_t (V)\otimes \mathcal{U}_t (V)}/d\\
 &=\Tr[F(W\mathcal{U}_t (V)\otimes W \mathcal{U}_t (V))]-|\Tr V|^{2}/d\\
 &=\Tr(W\mathcal{U}_t (V) W \mathcal{U}_t (V))=d\cdot O(t).
\end{split}
\end{equation}
Then, $k_A(t)$ has exponential decay with decay parameter as $p(A) = dO(t)$. 
\end{proof}

\section{Sample complexity}\label{appendssc:var}
In this section, we give a bound of the variance in OTOC measurement protocol, which also characterizes the sampling complexity. Recall that in the measurement protocol, $B_{xy}= \lketbra{\rho}{E_{x}}\otimes \lketbra{\rho}{E_{y}}$ and 
%\begin{equation}\label{Eq:A}
%\begin{split}
 %   A&=d(d^2-1)^2\sum_{\sigma\in \mathbb{P}_n^0}\Tr(W \sigma)^2\lketbra{V^{\dagger}\otimes V}{\sigma\otimes \sigma}\\
     %  &= (d^2-1)^2(\lketbra{V\otimes V}{S(W\otimes W)}-\lketbra{V\otimes V}{\mathbf{1}\otimes\mathbf{1}}),
%\end{split}
%\end{equation}
\begin{equation}\label{Eq:A}
\begin{split}
A&=(d^2-1)^2\sum_{\sigma\in \mathbb{P}_n^0}\Tr(W \sigma W \sigma)\lketbra{\sigma\otimes \sigma}{V
\otimes V}\\
&= (d^2-1)^2(\lketbra{F(W\otimes W)}{V\otimes V}-\lketbra{\mathbf{1}/\sqrt{d}\otimes\mathbf{1}/\sqrt{d}}{V\otimes V}),
\end{split}
\end{equation}
which we can directly obtain that $P_{ad}^{\otimes 2}AP_{ad}^{\otimes 2}=A$. Define the projector
\begin{equation}
P=\underset{g\in \mathbb{G}}{\mathbb{E}}\tau_{ad}(g)^{\otimes 2}\otimes \omega(g) = P_{ad}^{\otimes 2}\otimes \mathbf{1}(\underset{g\in \mathbb{G}}{\mathbb{E}}\omega(g)^{\otimes 3})P_{ad}^{\otimes 2}\otimes \mathbf{1}.
\end{equation}
Since the Clifford group forms a $3$-design \cite{PhysRevA.96.062336}, there is
\begin{equation}
%\begin{split}
    P =\sum_{\pi,\pi '\in S_3}Q_{\pi,\pi '}P_{ad}^{\otimes 2}\otimes \mathbf{1}\lketbra{\pi}{\pi '}P_{ad}^{\otimes 2}\otimes \mathbf{1},
   % &=P_0 W P_0,    
%\end{split}
\end{equation}
where $Q=(Q_{\pi,\pi '})$ is the Weingarten matrix, and $S_3$ is the permutation group for three copies of the base Hilbert space, i.e.,
\begin{equation}
\pi\ket{i_1,i_2,i_3}=\ket{i_{\pi(1)},i_{\pi(2)},i_{\pi(3)}}.
\end{equation}
%$P_0=P_{ad}^{\otimes 2}\otimes \mathbf{I}$

% Then we give the proof of Theorem \ref{th:var}.
\begin{theorem}
 Viewing $f_A$ defined in Eq.~\eqref{eq:otoccor} as a random variable with probability distribution from the shadow data, $p(x, \mathbf{g}^1)p(y, \mathbf{g}^2)$, the variance of this random variable is upper bounded by $O(d^{8m-12})$.
\end{theorem}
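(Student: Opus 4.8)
The first step is to discard the squared mean and bound $\mathbf{Var}(f_A)\le\mathbb E[f_A^2]$, which is legitimate because the Pauli--Liouville representation is real, so the correlation function of Eq.~\eqref{eq:otoccor} is real-valued. Expanding,
\begin{equation}
\mathbb E[f_A^2]=\sum_{x,y\in\mathcal X}\ \underset{\mathbf g^1,\mathbf g^2\in\mathbb G^{\times m}}{\mathbb E}\ f_A(x,y,\mathbf g^1,\mathbf g^2,m)^2\,p(x,\mathbf g^1)\,p(y,\mathbf g^2),
\end{equation}
I write $f_A^2$ as a trace over two copies of the space on which $A$ and $B_{xy}$ act, and use $p(x,\mathbf g^i)=\lbra{\tilde E_x}\omega(g^i_m)\prod_{j=1}^{m-1}\mathcal U_t\,\omega(g^i_j)\lket{\tilde\rho}$ for the OTOC circuit of Fig.~\ref{fig:u2}. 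Each gate $g^1_j$ then occurs three times (twice as $\tau_{ad}(g^1_j)$ from $f_A^2$, once as $\omega(g^1_j)$ from $p(x,\mathbf g^1)$), and likewise each $g^2_j$; averaging over the gates factorizes across $j$ and each average produces exactly the projector $P=\mathbb E_g\,\tau_{ad}(g)^{\otimes2}\otimes\omega(g)$ introduced above, acting on the relevant triple of Pauli--Liouville factors. Collecting the outcome-summed SPAM and $B_{xy}$ data into a boundary operator $\Xi$, this yields
\begin{equation}
\mathbb E[f_A^2]=\Tr\!\big(\Xi\,\mathcal T^{\,m-1}\big),\qquad \mathcal T=\big(A^{\otimes2}\otimes\mathcal U_t\otimes\mathcal U_t\big)\big(P^{(1)}\otimes P^{(2)}\big),
\end{equation}
where $P^{(1)},P^{(2)}$ are the commuting copies of $P$ on the sequence-$1$ and sequence-$2$ factors.

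Next I use that the $n$-qubit Clifford group is a unitary $3$-design, so $\mathbb E_g\,\omega(g)^{\otimes3}$ is the Haar third moment and $P=\sum_{\pi,\pi'\in S_3}Q_{\pi,\pi'}\,P_{ad}^{\otimes2}\otimes\mathbf 1\lketbra{\pi}{\pi'}P_{ad}^{\otimes2}\otimes\mathbf 1$, with $Q$ the Weingarten matrix, exactly as set up before the statement. In particular $\rank P\le|S_3|=6$ (in fact strictly less, since $P_{ad}^{\otimes2}\otimes\mathbf 1\lket{\mathrm{id}}=P_{ad}\lket{\mathbf 1}\otimes P_{ad}\lket{\mathbf 1}\otimes\lket{\mathbf 1}=0$), hence $P^{(1)}\otimes P^{(2)}$ has rank $O(1)$ and $\mathcal T$ is supported on an $O(1)$-dimensional subspace spanned by $P_{ad}$-projected permutation-operator vectors. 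Writing $\mathcal T^{\,m-1}=X(PXP)^{m-2}P$ for $m\ge2$ (the cases $m\le2$ are immediate), with $X=A^{\otimes2}\otimes\mathcal U_t\otimes\mathcal U_t$, the estimate of $\mathbb E[f_A^2]$ reduces to bounding (i) the finitely many entries of the compressed transfer matrix $PXP$ in the permutation basis, and (ii) the contractions of that matrix with the boundary vectors coming from $\Xi$ and from the single leftover factor $X$.

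These entries I evaluate from the explicit structure of the operators. Since $A=(d^2-1)^2\lketbra{u}{V\otimes V}$ with $\lket u=\sum_{\sigma\in\mathbb P_n^0}\Tr(W\sigma W\sigma)\lket{\sigma\otimes\sigma}$, every matrix element of $A^{\otimes2}$ between projected permutation vectors factorizes through inner products $\lbraket{\,\cdot\,}{u}$ and $\lbraket{V\otimes V}{\,\cdot\,}$; because $V,W$ are (unnormalized) Pauli strings, each of these is a short, explicitly enumerable sum of traces of products of Pauli operators with permutation operators, every such trace lying in $\{0,\pm d,\pm d^2,\dots\}$, so these matrix elements are exact powers of $d$ that I can read off. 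The matrix elements of $\mathcal U_t\otimes\mathcal U_t$ between permutation vectors are controlled by $\lVert\mathcal U_t\rVert_\infty=1$ together with the dimension factors $\lbraket{\pi}{\pi'}=d^{\#\mathrm{cycles}(\pi^{-1}\pi')}$, and the Weingarten entries obey $|Q_{\pi,\pi'}|=O(d^{-3})$. Assembling these estimates, each factor of $PXP$ contributes $O(d^{8})$, the $m-2$ powers give $O(d^{8(m-2)})$, and the boundary contractions --- the leftover $X$ and $\Xi$, the latter simplified using $\sum_x\tilde E_x=\mathbf 1$ and $\lbraket{\tilde\rho}{\tilde\rho}\le1$ --- together contribute $O(d^{4})$, so $\mathbb E[f_A^2]=O(d^{8m-12})$ and the claimed variance bound follows.

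The bookkeeping in the last step is the crux, and the main obstacle. A naive submultiplicative estimate $\lVert\mathcal T\rVert_\infty^{m-1}$ overcounts badly --- it would give a rate $O(d^{12})$ per gate, since $\lVert A\rVert_\infty=O(d^6)$ --- so the correct $O(d^{8})$-per-step rate genuinely relies on the cancellations produced by sandwiching the rank-one Pauli structure of $A$ between the low-rank projectors $P^{(1)}\otimes P^{(2)}$ and on the $O(d^{-3})$ decay of the Weingarten coefficients; making the powers line up requires tracking exactly which permutations in $S_3$ survive the $P_{ad}$-projections and pairing them correctly with the cycle structures that govern the surviving trace factors. Finally, reinstating noisy SPAM changes nothing in the $d$-scaling: $\tilde\rho$ and $\{\tilde E_x\}$ enter only through bounded, outcome-summable operators, so worst-case SPAM can be absorbed into the implicit constant.
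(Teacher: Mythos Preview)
Your proposal is correct and follows essentially the same strategy as the paper: bound the variance by the second moment, rewrite $\mathbb E[f_A^2]$ as a trace with transfer-matrix structure, use the Clifford $3$-design property to express $P=\mathbb E_g\,\tau_{ad}(g)^{\otimes2}\otimes\omega(g)$ via the Weingarten expansion over $S_3$, exploit $P_{ad}\lket{\mathbf 1}=0$ to kill permutations with fixed points in positions $1,2$, and read off the surviving matrix elements as explicit powers of $d$ from the Pauli structure of $A$. The paper packages the final bookkeeping slightly differently---it introduces the submultiplicative norm $\|M_{m\times n}\|=\max(m,n)\max_{ij}|M_{ij}|$ and factors the bound as $\|\Theta\|\|Q\|^{2m}\|\Omega\|^{m-1}=O(d^{2})\cdot O(d^{-6m})\cdot O(d^{14(m-1)})$, and it carries out the $\Omega$-entry computation explicitly via tensor-network diagrams to locate the maximum at $(d^2-1)^4\cdot d^6$---but your $O(d^{8})$-per-step, $O(d^{4})$-boundary accounting is the same arithmetic regrouped.
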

%Then, the variance can be calculated as below.
% \begin{theorem}
%     The expectation value $k_A(m)$ of the second-order correlation function defined in Eq.~\eqref{eq:otoccor} has a variance bounded by $O(d^{8m-12})$.
% \end{theorem}
\begin{proof}
%\mathbb{V}_{f_A(m)}
\begin{equation}
\begin{split}
\mathbf{Var}(f_A(m)) &=\mathbb{E}(f_A(m))^2-(\mathbb{E} f_A(m))^2\\
&\leq \sum_{x,y}\underset{\mathbf{g}^{1},\mathbf{g}^{2}\in \mathbb{G}^{\times m}}{\mathbb{E}} \Bigg[\Tr(B_{xy}[\tau_1 (g^{1}_{m})\otimes \tau_2(g^2_m)] \prod_{i=1}^{m-1} A[\tau_1(g^1_i)\otimes \tau_2(g^2_i)])\Bigg]^2 p(x,\mathbf{g}^{1})p(y,\mathbf{g}^{2})\\
&=\sum_{x,y}\Tr((B_{xy}^{\otimes 2}\otimes \lketbra{\Lambda_R(\tilde{\rho})}{\Lambda_L^*(\tilde{E_x})}\otimes \lketbra{\Lambda_R(\tilde{\rho})}{\Lambda_L^*(\tilde{E_y})}) P^{\otimes 2}\Big([A^{\otimes 2} \otimes \Lambda^{\otimes 2}] P^{\otimes 2} \Big)^{m-1})\\
% &=\sum_{x,y}\Tr(\Theta^{xy}\Big( [A^{\otimes 2}\otimes \Lambda^{\otimes 2}] P^{\otimes 2} \Big)^{m-1}Q^{\otimes 2})\\
&=\sum_{x,y}\Tr(\Theta^{xy}(Q^{\otimes 2}\Omega)^{m-1}Q^{\otimes 2}),
\end{split}
\end{equation}
where $\Theta^{xy}$ is the corresponding matrix with element
\begin{equation}
    \Theta^{xy}_{\pi,\pi ',\mu,\mu '}=\lbra{\pi '}\Big[P_{ad}^{\otimes 2}\lketbra{\rho}{E_{x}}^{\otimes 2}P_{ad}^{\otimes 2}\otimes \lketbra{\Lambda_R(\tilde{\rho})}{\Lambda_L^*(\tilde{E_x})}\Big]\lket{\pi}\lbra{\mu '}\Big[P_{ad}^{\otimes 2}\lketbra{\rho}{E_{y}}^{\otimes2}P_{ad}^{\otimes 2}\otimes \lketbra{\Lambda_R(\tilde{\rho})}{\Lambda_L^*(\tilde{E_y})}\Big]\lket{\mu},
\end{equation}
and $\Omega$ is the corresponding matrix with element
\begin{equation}\label{eq:Ome}
    \Omega_{\pi,\pi ',\mu,\mu '}=\lbra{\pi '\otimes \mu '}A_{14}\otimes A_{25}\otimes \Lambda_{3}\otimes\Lambda_{6}\lket{\pi\otimes\mu}.
\end{equation}
Here, there are six copies of the base Hilbert space with $\pi,\pi '$ being permutation operators of copies $123$ and $\mu,\mu '$ being permutation operators of copies $456$. In addition, the subscripts represent the copies that the operator acts on. For example, $A_{14}$ is an operator on the first and the fourth copies of the Hilbert space. 

%Denote $\norm{A}_{l_1}=\sum_{i,j}\abs{A_{ij}}$ and obviously,
%\Tr A_{m\times n}\leq \norm{A_{m\times n}}_{l_1}\leq mn \underset{i,j}{\max}\abs{A_{ij}}.
Denote $\norm{A_{m\times n}}= \max(m,n) \underset{i,j}{\max}\abs{A_{ij}}$ and obviously,
\begin{equation}
    \Tr A_{m\times n}\leq \norm{A_{m\times n}}.
\end{equation}
%\blue{Is this bound too loose? We have $\Tr A_{m\times n}\leq \max(m,n) \underset{i,j}{\max}\abs{A_{ij}}$?}
Suppose $B$ is a $n\times s$ matrix, then 
\begin{equation}
\begin{split}
    \norm{AB}&=\max(m,s)\underset{i,k}{\max}\abs{\sum_{j}A_{ij}B_{jk}}\\
    &\leq n\max(m,s)\underset{i,j}{\max}\abs{A_{ij}} \underset{j,k}{\max}\abs{B_{jk}} \\
    &\leq \max(m,n)\max(n,s)\underset{i,j}{\max}\abs{A_{ij}} \underset{j,k}{\max}\abs{B_{jk}} =\norm{A}\norm{B}.
\end{split}
\end{equation}
Then, the variance can be further bounded by the inequality
\begin{equation}
    \begin{split}
        \mathbb{V}_{f_A (m)}&\leq \norm{\sum_{x,y}\Theta^{xy}(Q^{\otimes 2}\Omega)^{m-1}Q^{\otimes 2}}\\
        &\leq \norm{\Theta}\norm{Q}^{2m}\norm{\Omega}^{m-1}\\
        &\leq c(\underset{i,j}{\max}\abs{\Theta_{ij}})(\underset{i,j}{\max}\abs{Q_{ij}})^{2m}(\underset{i,j}{\max}\abs{\Omega_{ij}})^{m-1}\\
        &=c(\frac{d^2-2}{d(d^2-1)(d^2-4)})^{2m}(\underset{i,j}{\max}\abs{\Theta_{ij}})(\underset{i,j}{\max}\abs{\Omega_{ij}})^{m-1}\\
        &=O(d^{-6m}) (\underset{i,j}{\max}\abs{\Theta_{ij}})(\underset{i,j}{\max}\abs{\Omega_{ij}})^{m-1},
    \end{split}
\end{equation}
where $c$ is a constant independent of the system dimension $d$ and $\Theta=\sum_{x,y}\Theta_{xy}$. Thus, we only need to calculate the maximal element of matrices $\Theta$ and $\Omega$.

%Since the projector $P_{ad}$ is supported on the space of traceless matrices, it indicates that \red{$P_{ad}(\mathbf{1})=0$ (this is a projector? not a function?)}. \red{Thus, the element $\Theta_{\pi,\pi ',\mu,\mu '}$ is nonzero iff $\pi,\pi '\in \{(12),(123),(132)\}$ and $\mu,\mu '\in \{(45),(456),(465)\}$ (not straightforward enough..., by the way, this paragraph does not seem to be necessary)}.

Notice that $\Theta=\tilde{\Theta}^{\otimes 2}$ where the matrix $\tilde{\Theta}$ has the element 
\begin{equation}\label{eq:apptheta}
   \tilde{\Theta}_{\pi,\pi'} =\sum_{x}\lbra{\pi '}\Big[P_{ad}^{\otimes 2}\lketbra{\rho}{E_{x}}^{\otimes 2}P_{ad}^{\otimes 2}\otimes \lketbra{\Lambda_R(\tilde{\rho})}{\Lambda_L^*(\tilde{E_x})}\Big]\lket{\pi}.
\end{equation}
The maximal element of matrix $\tilde{\Theta}$ is in the order of $d$ \cite{helsen2021estimating}. Thus, $\underset{i,j}{\max}\abs{\Theta_{ij}}=O(d^2)$.

To calculate the matrix $\Omega$, we notice that the projector $P_{ad}$ is supported on the space of traceless matrices, that is,
\begin{equation}
    \begin{split}
       P_{ad}\lket{\mathbf{1}}&=\sum_{\sigma\in\mathbb{P}_n^0}\lket{\sigma}\lbraket{\sigma}{\mathbf{1}}=0,\\
       \lbra{\mathbf{1}}P_{ad}&=\sum_{\sigma\in\mathbb{P}_n^0}\lbraket{\mathbf{1}}{\sigma}\lbra{\sigma} =0.
    \end{split}
\end{equation}
The construction of $A$ satisfies that $A=P_{ad}^{\otimes 2}A P_{ad}^{\otimes 2}$. Thus, there are four projectors $P_{ad}$ acting on copies $1245$ within Eq.~\eqref{eq:Ome}. If one of the copies $1245$ remains unchanged under a permutation, then the corresponding matrix elements of $\Omega$ are zero. For example, we take $\pi=(13)$ and then
\begin{equation}
    P_{ad}^{\otimes 4}(A_{14}\otimes A_{25})P_{ad}^{\otimes 4}\otimes\Lambda_3\otimes\Lambda_6\lket{(13)\otimes\mathbf{1}_2\otimes\mu }=0.
\end{equation}
%(this is incorrect. If $\pi'=(13)$ and $\mu'=(46)$, then the result is also nonzero.)
%Notice that the maximal matrix element for $\Omega$ remains unchanged if we replace $A^T$ with $A$ in Eq. \eqref{eq:Ome}.
Therefore, the element $\Omega_{\pi,\pi ',\mu,\mu '}$ is nonzero iff $\pi,\pi '\in \{(12),(123),(132)\}$ and $\mu,\mu '\in \{(45),(456),(465)\}$. Next, we leverage the tool of the tensor network to calculate the element of matrix $\Omega$. From Eq.\eqref{Eq:A}, there is $A^{\otimes 2}=(d^2-1)^4(B+C/d^2-D/d-E/d)$ with
\begin{equation}
\begin{split}
    B&=\lketbra{F(W\otimes W)^{\otimes 2}}{V^{\otimes 4}},\\
    C&=\lketbra{\mathbf{1}^{\otimes 4}}{V^{\otimes 4}},\\
    D&=\lketbra{F(W\otimes W)\otimes \mathbf{1}^{\otimes 2}}{V^{\otimes 4}},\\
    E&=\lketbra{\mathbf{1}^{\otimes 2}\otimes F(W\otimes W)}{V^{\otimes 4}}.  
\end{split}
\end{equation}
To bound the maximum element of $\Omega$, we need to calculate the maximal element of the matrices generated by $B$, $C$, and $D$, respectively. By symmetry, the elements of $D$ and $E$ have the same order.

For $B$, we first take $\pi=(123),\pi '=(12),\mu=\mu '=(456)$ as an example. There is
\begin{equation}
%\begin{split}
    \lbra{(12)\otimes(456)}B\otimes \Lambda_{3}\otimes\Lambda_{6}\lket{(123)\otimes(456)}=\begin{tabular}{c}
     \includegraphics[scale=0.2]{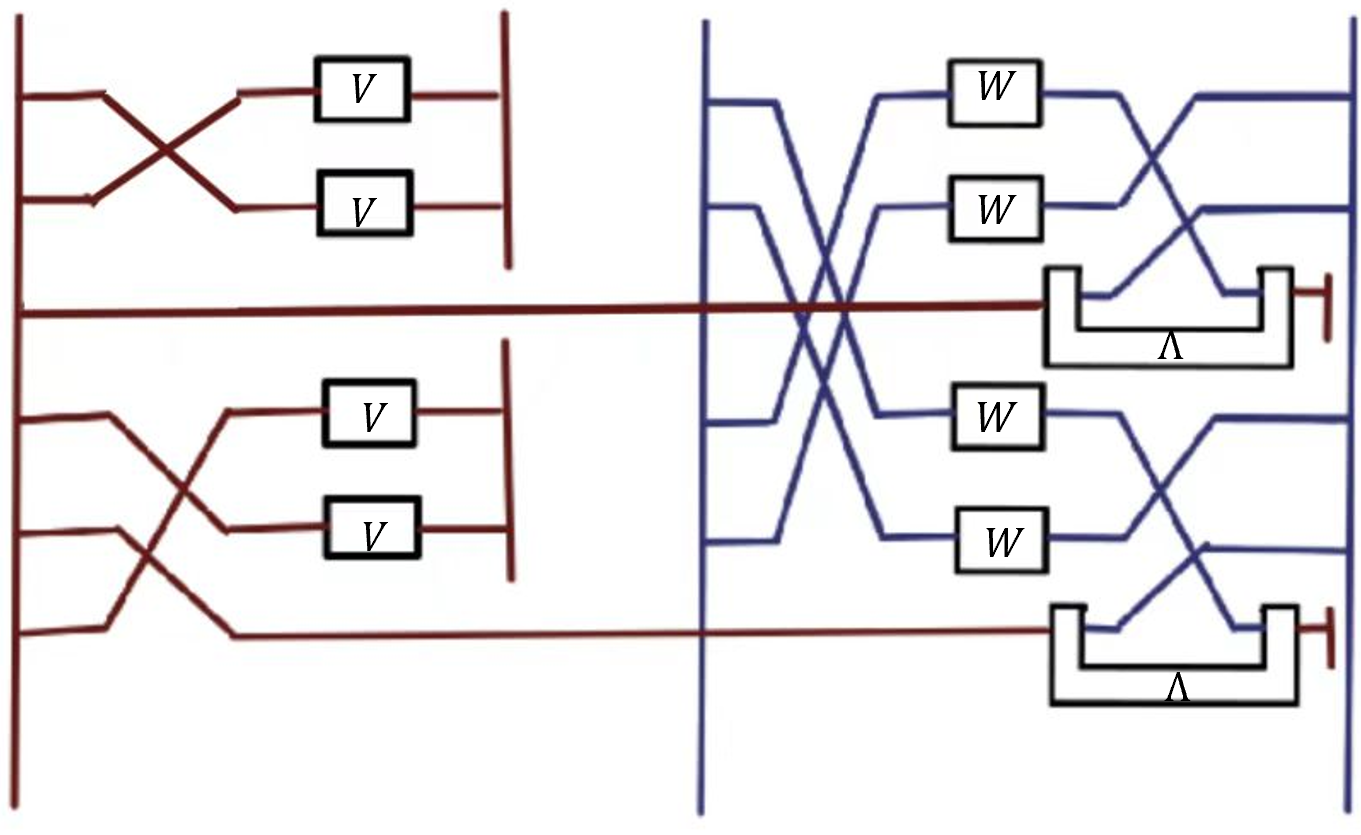}
\end{tabular}
=
d \left(\begin{tabular}{c}
\includegraphics[scale=0.3]{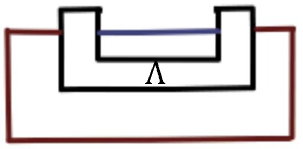}
\end{tabular}\right)^2=d[\Tr(\Lambda(\mathbf{1}))]^2=d^3.
%\end{split}
\end{equation}
%From the above example, the choice of permutations $\pi',\mu'$ has no effect on the results. According to the symmetry of the systems, we only need to calculate the elements for $\pi=(12),\mu=(45)$ and $\pi=(123),\mu=(45)$. Explicitly, 
%\begin{equation}
 %   \lbra{(123),(456)}B\otimes \Lambda_{3}\otimes\Lambda_{6}\lket{(123),(45)}=\begin{tabular}{c}
  %   \includegraphics[scale=0.2]{}
%\end{tabular}
%=
%d\left(\begin{tabular}{c}
 %    \includegraphics[scale=0.2]{figure3/d.png}
%\end{tabular}\right)^2=d^3,
%\end{equation}
Similarly, 
\begin{equation}
    \lbra{(123)\otimes(456)}B\otimes \Lambda_{3}\otimes\Lambda_{6}\lket{(12)\otimes(45)}=\begin{tabular}{c}
     \includegraphics[scale=0.2]{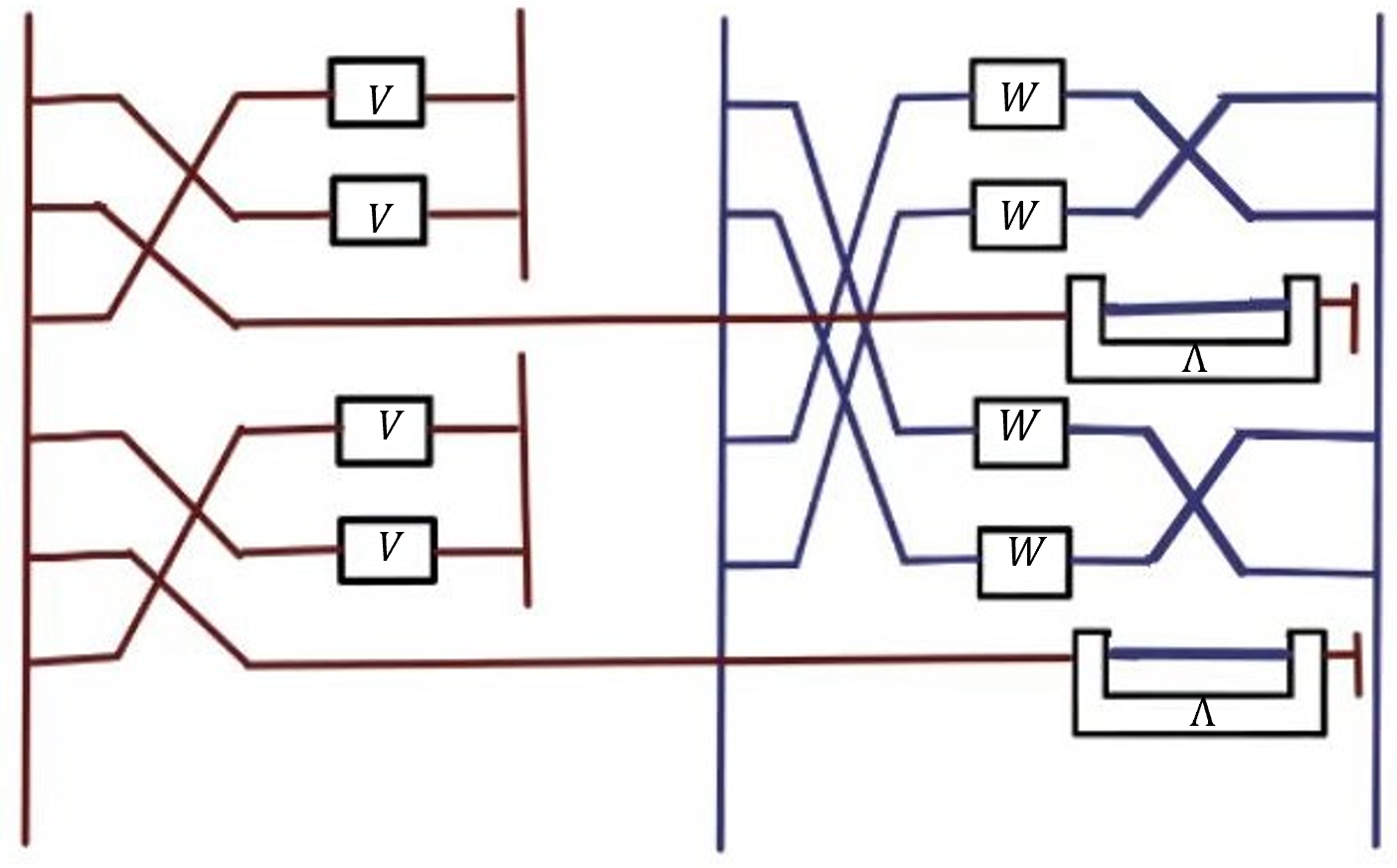}
\end{tabular}
=d^2
\left(\begin{tabular}{c}
     \includegraphics[scale=0.3]{figure3/d.png}
\end{tabular}\right)^2=d^4.
\end{equation}
From the symmetry of the system and these two examples above, we found that the maximum element corresponding to $B$ is obtained when $\pi=\pi'=(12),\mu=\mu'=(45)$. In this case, there is
%\blue{For B, the maximum element is $d^6$ as shown below.
%\begin{equation}
%\lbra{(12),(45)}B\otimes \Lambda_{3}\otimes\Lambda_{6}\lket{(1),(4)} = \Tr(V^2)^2\Tr(W^2)^2\Tr(\Lambda(\mathbf{1}))^2 %= d^6.
%\end{equation}
\begin{equation}
\lbra{(12)\otimes(45)}B\otimes \Lambda_{3}\otimes\Lambda_{6}\lket{(12)\otimes(45)} = \begin{tabular}{c}
\includegraphics[scale=0.36]{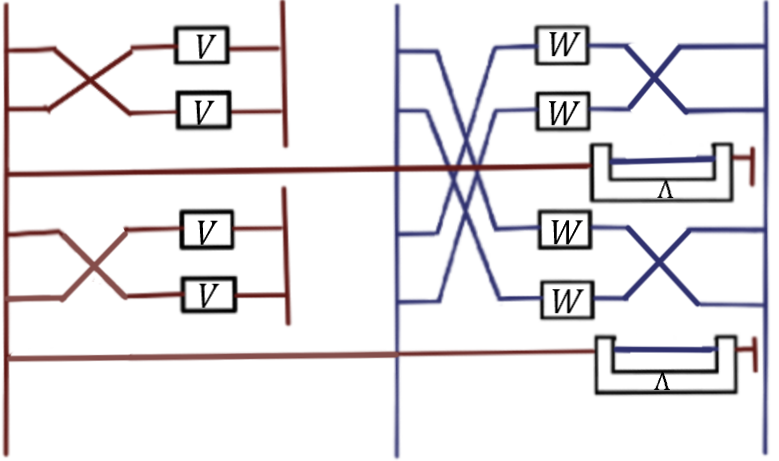}
\end{tabular}=\Tr(V^2)^2\Tr(W^2)^2\Tr(\Lambda(\mathbf{1}))^2 = d^6.
\end{equation}

Similarly, we can obtain for $C$,
\begin{equation}
   \underset{\pi,\mu,\pi',\mu'}{\max} \lbra{\pi ',\mu '}C\otimes \Lambda_{3}\otimes\Lambda_{6}\lket{\pi,\mu}= \lbra{(12)\otimes(45)}C\otimes \Lambda_{3}\otimes\Lambda_{6}\lket{(12)\otimes(45)} = d^6.
\end{equation}
and for $D$,
\begin{equation}
\underset{\pi,\mu,\pi',\mu'}{\max} \lbra{\pi ',\mu '}D\otimes \Lambda_{3}\otimes\Lambda_{6}\lket{\pi,\mu} = \lbra{(12)\otimes(45)}D\otimes \Lambda_{3}\otimes\Lambda_{6}\lket{(12)\otimes(45)} = d^5.
\end{equation}
Consequently, 
\begin{equation}\label{eq:appvar}
 \mathbf{Var}(f_A(m))\leq O(d^{8m-12}).
\end{equation}

\end{proof}

From the variance of the single-shot experiment, we first calculate the variance of the estimator given in Eq.~\eqref{eq:kfmhat}.

\begin{corollary}
   Suppose the number of samples is $S$ and the estimator for $k_A(m,S)$ is 
    \begin{equation}
\hat{k}_A(m,S) = \frac{1}{S(S-1)}\sum_{i\neq j}f(x_i,x_j,\mathbf{g}^i,\mathbf{g}^j, m).
% \red{\hat{k}_f(m) = \frac{1}{S(S-1)}\sum_{i\neq j}\sum_{x,y}\overline{f(x,y,\mathbf{g}_i,\mathbf{g}_j)}}\text{ simulation version}
\end{equation}
Then the variance has an upper bound,
\begin{equation}\label{eq:appsamp}
    \mathbf{Var}[\hat{k}_A(m,S)]\leq \left\{
    \begin{aligned}
    \frac{1}{S(S-1)}O(d^{-4})+\frac{2(S-2)}{S(S-1)}O(d^{-1}),\quad m=1\\
    \frac{1}{S(S-1)}O(d^{4})+\frac{2(S-2)}{S(S-1)}O(1), \quad m=2\\
    \frac{1}{S(S-1)}O(d^{8m-12})+\frac{2(S-2)}{S(S-1)}O(d^{4m-4}),\quad m> 2
    \end{aligned}
    \right
    .
\end{equation}
%\begin{equation}\label{eq:appsamp}
 %   \mathbf{Var}[\hat{k}_A(m,S)]\leq  \frac{1}{S(S-1)}O(d^{8m-12})+\frac{2(S-2)}{S(S-1)}O(d^{4m-4}).
%\end{equation}
\end{corollary}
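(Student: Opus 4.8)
The estimator $\hat k_A(m,S)$ is a degree-two U-statistic built from the i.i.d.\ shadow samples $Z_i=(x_i,\mathbf g^i)$, each drawn with probability $p(x_i,\mathbf g^i)$, with (in general non-symmetric) kernel $g(Z_i,Z_j)=f(x_i,x_j,\mathbf g^i,\mathbf g^j,m)$. The plan is to write
\begin{equation}
\mathbf{Var}[\hat k_A(m,S)]=\frac{1}{S^2(S-1)^2}\sum_{i\neq j}\sum_{k\neq l}\mathbf{Cov}\big(g(Z_i,Z_j),g(Z_k,Z_l)\big),
\end{equation}
and to classify the terms by the overlap size $|\{i,j\}\cap\{k,l\}|\in\{0,1,2\}$. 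Pairs with empty overlap contribute zero by independence. Pairs with full overlap are either $\mathbf{Var}\big(g(Z_1,Z_2)\big)$ or $\mathbf{Cov}\big(g(Z_1,Z_2),g(Z_2,Z_1)\big)$; there are $O(S^2)$ of them, and since the joint law of the i.i.d.\ pair $(Z_1,Z_2)$ is exchangeable we have $g(Z_2,Z_1)\stackrel{d}{=}g(Z_1,Z_2)$, so by Cauchy--Schwarz all of them are bounded by $\mathbf{Var}(f_A(m))$, which the single-shot variance theorem gives as $O(d^{8m-12})$ (equal to $O(d^{-4})$ at $m=1$ and $O(d^{4})$ at $m=2$). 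Dividing by $S^2(S-1)^2$ produces the first summand $\frac{1}{S(S-1)}O(d^{8m-12})$ of Eq.~\eqref{eq:appsamp}.

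Pairs with overlap one are $O(S^3)$ in number and all reduce, up to which slot the shared sample occupies, to $\mathbf{Cov}\big(g(Z_1,Z_2),g(Z_1,Z_3)\big)$. By the Hájek identity each such term equals $\mathbb E\big[\tilde h(Z_1)^2\big]-\big(\mathbb E f_A(m)\big)^2\le \mathbb E\big[\tilde h(Z_1)^2\big]$, where $\tilde h(z_1)=\mathbb E_{Z}\big[g(z_1,Z)\big]$ is the half-averaged correlator obtained by integrating out one of the two sequences. Performing the $\mathbf g^2$-average replaces $\mathbb E_g\,[\tau_{ad}(g)\otimes\omega(g)]$ by the $\mathbb G$-twirl projector onto the irreducible components common to $\tau_{ad}$ and $\omega$, so $\tilde h(z_1)$ becomes a trace of an $(m-1)$-st matrix power of the same shape as $\Phi$ in Theorem~\ref{th:exdec1}; squaring $\tilde h$ and averaging $\mathbf g^1$ brings in a three-copy (Clifford $3$-design) twirl on the retained sequence, after which the projectors $P_{ad}$ carried by $A$ annihilate every permutation that fixes one of the $A$-legs. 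Running the same tensor-network estimates as in the single-shot variance theorem, but with $A$ now appearing with multiplicity two on the integrated sequence and multiplicity one on the retained one, is expected to give $\mathbb E[\tilde h(Z_1)^2]=O(d^{4m-4})$ for $m>2$; the cases $m=1,2$ leave few enough contractions to evaluate directly and yield the sharper orders $O(d^{-1})$ and $O(1)$. Since there are at most $4S(S-1)(S-2)$ such terms, division by $S^2(S-1)^2$ produces the second summand $\frac{2(S-2)}{S(S-1)}O(\cdot)$, the numerical constant being absorbed into the $O$.

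Adding the full-overlap and single-overlap contributions, and recalling that the empty-overlap contribution vanishes, gives the piecewise bound of Eq.~\eqref{eq:appsamp}. I expect the main obstacle to be the single-overlap estimate $\mathbb E[\tilde h(Z_1)^2]$: although it parallels the computation in the preceding theorem, one must correctly identify the twirl projector on the half-integrated sequence (it lives in $\tau_{ad}\otimes\omega$, not $\tau_{ad}^{\otimes2}\otimes\omega$), determine which Weingarten/permutation terms survive once the $P_{ad}$'s from $A$ --- now of unequal multiplicity on the two sequences --- are imposed, and in particular extract the sharp exponent $4m-4$ together with the improved small-$m$ values. A secondary and essentially routine point is the bookkeeping of how many ordered pairs of index-pairs fall into each overlap class, which fixes the prefactors $\frac{1}{S(S-1)}$ and $\frac{2(S-2)}{S(S-1)}$.
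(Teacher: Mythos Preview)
Your U-statistic decomposition, classification by overlap size, and treatment of the full-overlap terms via the single-shot variance bound all match the paper exactly; the counting of index pairs and the resulting prefactors $\tfrac{1}{S(S-1)}$ and $\tfrac{2(S-2)}{S(S-1)}$ are likewise the same (the kernel here is symmetric in its two arguments, so the paper's apparently incomplete enumeration of overlap patterns is in fact complete).

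The substantive difference is in how the single-overlap term $\mathbf{Var}[\tilde h(Z_1)]$ is bounded. You propose to square $\tilde h$ and redo a three-copy tensor-network estimate analogous to the single-shot theorem. The paper instead observes that after the $\mathbf g^2$-average the projector $P'=\lketbra{B_2}{B_2}$ collapses the second sequence entirely, leaving
\[
\tilde h(s_1)=O(d^{-1})\,\Tr\!\Big(\lketbra{\rho}{E_x}\,\tau_{ad}(g^1_m)\prod_{i=1}^{m-1}A'\,\tau_{ad}(g^1_i)\Big),
\]
which is precisely a \emph{first-order} (linear) UIRS correlator with a new single-copy operator $A'=(d^2-1)\sum_{\sigma\in\mathbb P_n^0}\Tr(W\sigma W\sigma)\Tr(\sigma\Lambda(V))\lketbra{\sigma}{V}$. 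For $m\ge 2$ the paper then simply invokes the variance bound already proved for linear UIRS in Ref.~\cite{helsen2021estimating}, which is controlled by $u(A')=\Tr(A'A'^{\dagger})/(d^2-1)$; a short Pauli computation gives $u(A')=O(d^4)$, whence $\mathbf{Var}[\tilde h]=O(d^{4m-4})$ for $m>2$ and $O(1)$ for $m=2$. Only the $m=1$ case is handled by a direct $3$-design calculation as you suggest. Your route would work, but the paper's reduction to the linear case is both shorter and explains transparently why the exponent drops from $8m-12$ to $4m-4$: one factor of $A$ has been replaced by its channel-contracted scalar, halving the effective degree.
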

\begin{proof}
 Denote $s_i=(x_i,\mathbf{g}^i)$ and the variance is
\begin{equation}\label{eq:appvarsamp}
    \begin{split}
\mathbf{Var}[\hat{k}_A(m,S)]&=\frac{1}{S^2(S-1)^2} \sum_{i\neq j,i'\neq j'}\mathbf{Cov}[f(s_i,s_j),f(s_{i'},s_{j'})]\\
&=\frac{1}{S^2(S-1)^2}\Big(\sum_{i=i',j=j',i\neq j}\mathbf{Cov}[f(s_i,s_j),f(s_{i'},s_{j'})] +\sum_{i=i',i\neq j,j\neq j',i\neq j'}\mathbf{Cov}[f(s_i,s_j),f(s_{i'},s_{j'})]\\
&+\sum_{j=j',i\neq j,i\neq i',i'\neq j}\mathbf{Cov}[f(s_i,s_j),f(s_{i'},s_{j'})]\Big)\\
&=\frac{1}{S^2(S-1)^2}\Big(\sum_{i\neq j}\mathbf{Var}[f(s_i,s_j)]+2\sum_{i\neq j,j\neq j',i\neq j'}\mathbf{Cov}[f(s_i,s_j),f(s_{i},s_{j'})]\Big)\\
&=\frac{1}{S(S-1)}\mathbf{Var}[f(s_1,s_2)]+\frac{2(S-2)}{S(S-1)}\mathbf{Cov}[f(s_1,s_2),f(s_1,s_3)]\\
&=\frac{1}{S(S-1)}\mathbf{Var}[f_A(m)]+\frac{2(S-2)}{S(S-1)}\mathbf{Cov}[f(s_1,s_2),f(s_1,s_3)].
    \end{split}
\end{equation}
The first term has already been given, and we only need to calculate the second term. When the sample $s_1$ is taken as a fixed data, $f(s_1,s_2)$ and $f(s_1,s_3)$ are independent. Therefore,
\begin{equation}
    \begin{split}
        \mathbf{Cov}[f(s_1,s_2),f(s_1,s_3)]&=\underset{s_1,s_2,s_3}{\mathbb{E}}[f(s_1,s_2)-k_A(m,S)][f(s_1,s_3)-k_A(m,S)]\\
        &=\underset{s_1}{\mathbb{E}} \Big( \underset{s_2,s_3}{\mathbb{E}}\Big[(f(s_1,s_2)-k_A(m,S))(f(s_1,s_3)-k_A(m,S)) \Big| s_1\Big] \Big)\\
      &=\underset{s_1}{\mathbb{E}}\Big( (\underset{s_2}{\mathbb{E}}[f(s_1,s_2)|s_1]-k_A(m,S))(\underset{s_3}{\mathbb{E}}[f(s_1,s_3)|s_1]-k_A(m,S))\Big)\\
      &=\mathbf{Var}[f(s_1)],
    \end{split}
\end{equation}
where $f(s_1)=\underset{s_2}{\mathbb{E}}[f(s_1,s_2)|s_1]$ is the conditional expectation when sample value $s_1$ is given. Explicitly,
\begin{equation}\label{eq:appconexp}
    \begin{split}
        f(s_1)&=\sum_{y}\underset{\mathbf{g}^{2}\in \mathbb{G}^{\times m}}{\mathbb{E}} \Tr(B_{xy}[\tau_1 (g^{1}_{m})\otimes \tau_2(g^2_m)] \prod_{i=1}^{m-1} A[\tau_1(g^1_i)\otimes \tau_2(g^2_i)]) p(y,\mathbf{g}^{2})\\
        &=\sum_{y}\Tr[(B_{xy}\otimes \lketbra{\Lambda_R(\tilde{\rho})}{\Lambda_L^*(\tilde{E_y})})(\tau_1(g^1_m) \otimes P')\prod_{i=1}^{m-1}(\mathbf{1}\otimes P')(A\otimes \Lambda)(\tau_1(g^1_i)\otimes P') ],
    \end{split}
\end{equation}
where 
\begin{equation}
    \begin{split}
        P'&=P_{ad}\otimes\mathbf{1}(\underset{g\in\mathbb{G}}{\mathbb{E}}\omega(g)^{\otimes 2})P_{ad}\otimes\mathbf{1}\\
        &=P_{ad}\otimes\mathbf{1}\Big(\frac{1}{d^2}\lketbra{\mathbf{1}^{\otimes 2}}{\mathbf{1}^{\otimes 2}} +\frac{1}{d^2-1}\sum_{\sigma,\sigma'\in\mathbb{P}_n^0} \lketbra{\sigma^{\otimes 2}}{\sigma'^{\otimes 2}}\Big)P_{ad}\otimes\mathbf{1}\\
        &=\frac{1}{d^2-1}\sum_{\sigma,\sigma'\in\mathbb{P}_n^0} \lketbra{\sigma^{\otimes 2}}{\sigma'^{\otimes 2}}\\
        &=\frac{1}{d^2-1}\lketbra{F-\mathbf{1}^{\otimes 2}/d}{F-\mathbf{1}^{\otimes 2}/d}=\lketbra{B_2}{B_2}.
    \end{split}
\end{equation}
Here, we use the fact that the Clifford group forms a two-design \cite{helsen2021estimating}.

To calculate the variance of $f(s_1)$, we need to simplify Eq.\eqref{eq:appconexp}. Specifically,
\begin{equation}
    \begin{split}
        &(\mathbf{1}\otimes P')(A\otimes \Lambda)(\tau_1(g^1_i)\otimes P')\\
        &= \sum_{\alpha,\alpha',\beta,\beta',\sigma\in\mathbb{P}_n^0} \Tr(W \sigma W\sigma)(\mathbf{1}\otimes\lketbra{\alpha^{\otimes 2}}{\alpha'^{\otimes 2}})(\lketbra{\sigma\otimes \sigma}{V\otimes V}\otimes \Lambda)(\tau_1(g^1_i)\otimes\lketbra{\beta^{\otimes 2}}{\beta'^{\otimes 2}})\\
        &=\sum_{\alpha,\alpha',\beta,\beta',\sigma\in\mathbb{P}_n^0} \Tr(W \sigma W\sigma) \Tr(\sigma\alpha')\Tr(V\beta)\Tr(\alpha'\Lambda(\beta))\Big[\lketbra{\sigma}{V}\tau_1(g^1_i)\otimes \lketbra{\alpha}{\beta'}\otimes \lketbra{\alpha}{\beta'}\Big]\\
        &=\sum_{\alpha,\beta',\sigma\in\mathbb{P}_n^0} \Tr(W \sigma W \sigma) \Tr(\sigma\Lambda(V))\Big[\lketbra{\sigma}{V}\tau_1(g^1_i)\otimes \lketbra{\alpha}{\beta'}\otimes \lketbra{\alpha}{\beta'}\Big]\\
        &=\sum_{\sigma\in\mathbb{P}_n^0} (d^2-1) \Tr(W \sigma W \sigma) \Tr(\sigma\Lambda(V))\lketbra{\sigma}{V}\tau_1(g^1_i)\otimes P'=A'\tau_1(g^1_i)\otimes P',
    \end{split}
\end{equation}
where $A'=(d^2-1)\sum_{\sigma\in\mathbb{P}_n^0} \Tr(W \sigma W \sigma) \Tr(\sigma\Lambda(V))\lketbra{\sigma}{V}$. Therefore,
\begin{equation}%\label{eq:appconexp}
    \begin{split}
        f(s_1)  &=\sum_{y}\Tr( \Big(\lketbra{\rho}{E_x}\tau_1(g_m^1)\prod_{i=1}^{m-1}A'\tau_1(g^1_i)\Big)\otimes \Big(\lketbra{\rho\otimes\Lambda_R(\tilde{\rho})}{E_y\otimes\Lambda_L^*(\tilde{E_y})} P'\Big) ),\\
        &=\sum_y\frac{1}{d^2-1}\Big(\Tr[\rho\Lambda_R(\tilde{\rho})]-\frac{1}{d}\Tr[\Lambda_R(\tilde{\rho})]\Big)\Big(\Tr[E_y \Lambda_L^*(\tilde{E_y})]-\frac{1}{d}\Tr E_y\Tr[\Lambda_L^*(\tilde{E_y})]\Big)\\
    &\times \Tr(\lketbra{\rho}{E_x}\tau_1(g_m^1)\prod_{i=1}^{m-1}A'\tau_1(g^1_i))\\
    &=O(d^{-1})\Tr(\lketbra{\rho}{E_x}\tau_1(g_m^1)\prod_{i=1}^{m-1}A'\tau_1(g^1_i)).
    \end{split}    
\end{equation}

When $m=1$, there is 
\begin{equation}
    \begin{split}
        \mathbf{Var}[f(s_1)]&\leq O(d^{-2})\sum_x \underset{g\in \mathbb{G}}{\mathbb{E}}\lbra{E_x}\tau(g)\lket{\rho}^2 \lbra{\Lambda_L^*(\tilde{E_x})}\omega(g)\lket{\Lambda_R(\tilde{\rho})}\\
        &= O(d^{-2})\sum_x\lbra{E_x^{\otimes 2}\otimes\Lambda_L^*(\tilde{E_x})}P\lket{\rho^{\otimes 2}\otimes\Lambda_R(\tilde{\rho})}\\
        &=O(d^{-2})\sum_x \sum_{\pi\in S_3}Q_{\pi,\pi }\lbraket{(E_x P_{ad})^{\otimes 2}\otimes\Lambda_L^*(\tilde{E_x})}{\pi}\lbraket{\pi}{(P_{ad} \rho)^{\otimes 2}\otimes\Lambda_R(\tilde{\rho})}\\
        &=O(d^{-2})\Tr\tilde{\Theta}=O(d^{-1}),
    \end{split}
\end{equation}
where the matrix $\tilde{\Theta}$ is defined in Eq.\eqref{eq:apptheta} and the elements of this matrix have the order $O(d)$.

When $m\geq 2$, according to the conclusion in Ref.\cite{helsen2021estimating}, the variance for $f(s_1)$ has an upper bound,
\begin{equation}
    \mathbf{Var}[f(s_1)]\leq O(d^{-4})\Big[11 u(A')\Big(r(A')^{m-2}+[2(m-2)^2 r(A')^{m-3}] \max\{11 u(A'),(11u(A'))^2\}\Big)\Big],
\end{equation}
where $u(A')=\Tr(A'A'^{\dagger})/(d^2-1)$ and $r(A')=u(A')(1+16 d^{-1/3})$. Since
\begin{equation}
    \begin{split}
\Tr(A'A'^{\dagger})&=(d^2-1)^2\sum_{\sigma,\sigma'\in\mathbb{P}_n^0}\Tr(W\sigma W\sigma)\Tr(W\sigma'W\sigma')\Tr(\sigma\Lambda(V))\Tr(\sigma'\Lambda(V))\Tr(\sigma\sigma')\Tr(V^2)\\
&=(d^2-1)^2 d\sum_{\sigma\in\mathbb{P}_n^0}[\Tr(W\sigma W\sigma)\Tr(\sigma\Lambda(V))]^2\\
&=(d^2-1)^2 d\sum_{\sigma\in\mathbb{P}_n^0}\Tr(\sigma^{\otimes 2}\Lambda(V)^{\otimes 2})\\
&=(d^2-1)^2 d \Tr[ (F-\mathbf{1}^{\otimes 2}/d)\Lambda(V)^{\otimes 2} ]\\
&=(d^2-1)^2 d \left(\Tr(\Lambda(V)^2)-\frac{1}{d}[\Tr(\Lambda(V))]^2\right)\\
&=O(d^6),
    \end{split}
\end{equation}  
 we can obtain $u(A')=O(d^4)$ and $r(A')=O(d^4)$. When $m> 2$, the variance can be bounded by $O(d^{4m-4})$. When $m=2$, the variance can be bounded by $O(1)$, which implies that the variance is bounded by a constant. From Eq.\eqref{eq:appvarsamp}, the overall variance is
\begin{equation}
    \mathbf{Var}[\hat{k}_A(m,S)]\leq \left\{
    \begin{aligned}
    \frac{1}{S(S-1)}O(d^{-4})+\frac{2(S-2)}{S(S-1)}O(d^{-1}),\quad m=1\\
    \frac{1}{S(S-1)}O(d^{4})+\frac{2(S-2)}{S(S-1)}O(1), \quad m=2\\
    \frac{1}{S(S-1)}O(d^{8m-12})+\frac{2(S-2)}{S(S-1)}O(d^{4m-4}),\quad m> 2
    \end{aligned}
    \right
    .
\end{equation}

%Since $\mathbf{Var}[f(s_1)]\leq \mathbf{Var}[f(s_1,s_2))]=\mathbf{Var}[f_A(m))]$, the variance of the estimator can be bounded
%\begin{equation}
 %   \mathbf{Var}[\hat{k}_A(m,S)]\leq \frac{2S-3}{S(S-1)}\mathbf{Var}[f_A(m))].
%\end{equation}

\end{proof}
Next, we calculate the variance of OTOC evaluated by the ratio of $k_A(2)$ and $k_A(1)$ as shown below.
\begin{equation}
\hat{O} =\frac{\hat{k}_A(2,S)}{\hat{k}_A(1,S)}=\frac{\frac{1}{S(S-1)}\sum_{i\neq j}f(x_i,x_j,\mathbf{g}^i,\mathbf{g}^j, 2)}{\frac{1}{S(S-1)}\sum_{i\neq j}f(x_i,x_j,\mathbf{g}^i,\mathbf{g}^j, 1)}
%\hat{O} = \frac{\hat{k}_A(2,S)}{\hat{k}_A(1,S)}=\frac{\sum_{i,j}\sum_{x,y}f_A(x,y,g_i,g_j,2)}{\sum_{i,j}\sum_{x,y}f_A(x,y,g_i,g_j,1)},
% \frac{\hat{k}_A(2,S)}{\hat{k}_A(1,S)}=\frac{\sum_{i\neq j}f_A(x_i,x_j,\mathbf{g}_i,\mathbf{g}_j,2)}{\sum_{i\neq j}f_A(x_i,x_j,\mathbf{g}_i,\mathbf{g}_j,1)},
\end{equation}
where we assume that the numbers of samples for two cases are both equal to $S$. For an arbitrary $m$, the expectation value of $\hat{k}_A(m,S)$ is 
\begin{equation}
    \begin{split}
        \mathbb{E}\hat{k}_A(m,S)&=k_A(m,S)\\
        &=\sum_{x,y}\underset{\mathbf{g}^{1},\mathbf{g}^{2}\in \mathbb{G}^{\times m}}{\mathbb{E}} \Bigg[\Tr(B_{xy}[\tau_1 (g^{1}_{m})\otimes \tau_2(g^2_m)] \prod_{i=1}^{m-1} A[\tau_1(g^1_i)\otimes \tau_2(g^2_i)])\Bigg] p(x,\mathbf{g}^{1})p(y,\mathbf{g}^{2})\\
        &=\sum_{x,y}\Tr((B_{xy}\otimes \lketbra{\Lambda_R(\tilde{\rho})}{\Lambda_L^*(\tilde{E_x})}\otimes \lketbra{\Lambda_R(\tilde{\rho})}{\Lambda_L^*(\tilde{E_y})}) {P'}^{\otimes 2}\Big([A \otimes \Lambda^{\otimes 2}] P'^{\otimes 2} \Big)^{m-1}).
    \end{split}
\end{equation}

When $m=1$, there is 
\begin{equation}\label{eq:appave1}
\begin{split}
     k_A(1,S)&=\lbraket{B_2}{\rho\otimes\Lambda_R(\tilde{\rho})}^2(\sum_x \lbraket{E_x\otimes\Lambda_L^*(\tilde{E_x})}{B_2})^2\\
     &=O(d^{-4})(\Tr[\rho\Lambda_R(\tilde{\rho})]-\frac{1}{d}\Tr[\Lambda_R(\tilde{\rho})])^2\Big(\sum_x \Tr[E_x\Lambda_L^*(\tilde{E_x})] - \frac{1}{d}\sum_x\Tr E_x \Tr[\Lambda_L^*(\tilde{E_x})] \Big)^2= O(d^{-2}).
\end{split}
\end{equation}
When $m=2$, there is 
\begin{equation}\label{eq:appave2}
\begin{split}
    k_A(2,S)&=k_A(1,S)\lbra{B_2^{\otimes 2}}A\otimes \Lambda^{\otimes 2}\lket{B_2^{\otimes 2}}\\
    &=O(d^{-2})(\lbra{B_2^{\otimes 2}}[\lketbra{F(W\otimes W)}{V\otimes V}\otimes\Lambda^{\otimes 2}]\lket{B_2^{\otimes 2}}-\frac{1}{d}\lbra{B_2^{\otimes 2}}[\lketbra{\mathbf{1}\otimes\mathbf{1}}{V\otimes V}\otimes\Lambda^{\otimes 2}]\lket{B_2^{\otimes 2}})\\
    &=O(d^{-2})\lbra{B_2^{\otimes 2}}[\lketbra{F(W\otimes W)}{V\otimes V}\otimes\Lambda^{\otimes 2}]\lket{B_2^{\otimes 2}}\\
    &=O(d^{-2})\begin{tabular}{c}
     \includegraphics[scale=0.4]{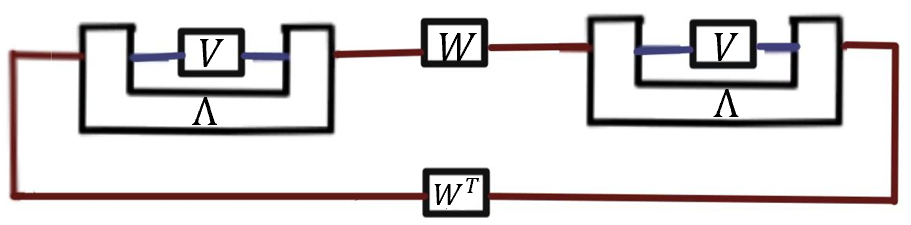}\end{tabular}\\
     &=O(d^{-2})\Tr(W\Lambda(V)W\Lambda(V))=O(d^{-1}).
\end{split}
\end{equation}

For two random variables $X,Y$, the uncertainty of the division $X/Y$ is
\begin{equation}
    \Delta\left(\frac{X}{Y}\right)\approx\left(\frac{\mu_X}{\mu_Y}\right)^2\left[\frac{\sigma_X^2}{\mu_X^2}-2\frac{\mathbf{Cov}(X,Y)}{\mu_X\mu_Y}+\frac{\sigma_Y^2}{\mu_Y^2}\right],
\end{equation}
where $\mu_X$ and $\mu_Y$ are the expectation values of $X$ and $Y$, respectively; $\sigma^2_X$ and $\sigma^2_Y$ are the variance values of $X$ and $Y$, respectively. In our case, $X=\hat{k}_A(2,S)$ and $Y=\hat{k}_A(1,S)$. According to Eq.~\eqref{eq:appsamp}, Eq.~\eqref{eq:appave1}, and Eq.~\eqref{eq:appave2}, there is 
\begin{equation}
    \mu_X^2=O(d^{-2}), \mu_Y^2=O(d^{-4}), \sigma_X^2=O(d^{4}S^{-2})+ O(S^{-1}),\sigma_Y^2=O(d^{-4}S^{-2})+O(d^{-1}S^{-1}).
\end{equation}

%  \mu_X^2=O(d^{-2}), \mu_Y^2=O(d^{-4}), \sigma_X^2=O(d^{4}),\sigma_Y^2=O(d^{-4}).
Thus, the uncertainty for the estimator can be bounded by
\begin{equation}
    \mathrm{Var}\left(\frac{\hat{k}_A(2,S)}{\hat{k}_A(1,S)}\right)=\mathrm{Var}\left(\frac{X}{Y}\right)\leq \left(\frac{\mu_X}{\mu_Y}\right)^2\left[\frac{\sigma_X^2}{\mu_X^2}+\frac{\sigma_Y^2}{\mu_Y^2}\right] = O\left(\frac{d^{8}}{S^{2}}\right)+O\left(\frac{d^{5}}{S}\right).
\end{equation}
%O\left(\frac{d^8}{S^2}\right)+O\left(\frac{d^4}{S}\right)

\section{Proof of Theorem \ref{th:exdec2}}\label{appenssc:thm3proof}
{\bf Theorem 2.} {\it (Exponential decay) Given a generalized UIRS protocol with respect to the group $\mathbb{G}$, the expectation value $k_A(m)$ defined with Eqs.~\eqref{eq:expec} and~\eqref{eq:groupfunction} has the following form:
\begin{equation}
    k_{A}(m)=\Tr(\Theta(\{E_x\},\rho)[\Phi(A,\Lambda)]^{m-1}),
\end{equation}
where $\Theta(\{E_x\},\rho)$ and $\Phi(A,\Lambda)$ are induced matrices dependent on SPAM and channel $\Lambda$, respectively. The matrix $\Phi(A,\Lambda)$ has the element
\begin{equation}\label{eq:decayrandom}
    [\Phi(A,\Lambda)]_{i,j}=\frac{1}{|P_j|}\Tr(P_i A^{T}P_{j} \Lambda^{\otimes 2}),
\end{equation}
where $P_i$ and $P_j$ are the projectors onto the $i$-th and $j$-th copies of $\tau$ inside $\omega^{\otimes 2}$, respectively.
In the special case that the representation $\omega^{\otimes 2}$ only has one copy of $\tau$, the matrix $\Phi(A,\Lambda)$ reduces into a number.}

\begin{proof}
\begin{equation}
\begin{split}
k_A(m)&=\sum_{x,y\in\mathcal{X}}\underset{\mathbf{g}\in \mathbb{G}^{\times m}}{\mathbb{E}}\Tr(B_{xy} \tau(g_{m})\prod_{i=1}^{m-1}A\tau(g_i))\lbra{\tilde{E_x}}\prod_{i=1}^{m}\Lambda_L\omega(g_i)\Lambda_R\lket{\tilde{\rho}}\lbra{\tilde{E_y}}\prod_{i=1}^{m}\Lambda_L\omega(g_i)\Lambda_R\lket{\tilde{\rho}}\\
&=\sum_{x,y\in\mathcal{X}}\underset{\mathbf{g}\in \mathbb{G}^{\times m}}{\mathbb{E}}
\Tr((B_{xy}\otimes \lketbra{\Lambda_R(\tilde{\rho})}{\Lambda_L^*(\tilde{E_x})}\otimes \lketbra{\Lambda_R(\tilde{\rho})}{\Lambda_L^*(\tilde{E_y})})\tau(g_m)\otimes\omega(g_m)^{\otimes 2}\prod_{i=1}^{m-1}(A\otimes \Lambda^{\otimes 2}(\tau(g_i)\otimes\omega(g_i)^{\otimes 2})))\\
&=\sum_{x,y\in\mathcal{X}}\Tr((B_{xy} \otimes \lketbra{\Lambda_R(\tilde{\rho})}{\Lambda_L^*(\tilde{E_x})}\otimes \lketbra{\Lambda_R(\tilde{\rho})}{\Lambda_L^*(\tilde{E_y})})\Big(\underset{g\in \mathbb{G}}{\mathbb{E}}(\tau(g)\otimes \omega(g)^{\otimes 2})(A\otimes \Lambda^{\otimes 2}) \underset{g\in\mathbb{G}}{\mathbb{E}}(\tau(g)\otimes \omega(g)^{\otimes 2})\Big)^{m-1} )\\
&=\sum_{x,y\in\mathcal{X}}\Tr((B_{xy} \otimes \lketbra{\Lambda_R(\tilde{\rho})}{\Lambda_L^*(\tilde{E_x})}\otimes \lketbra{\Lambda_R(\tilde{\rho})}{\Lambda_L^*(\tilde{E_y})})\Big(P_{\tau}(A\otimes \Lambda^{\otimes 2})P_{\tau}\Big)^{m-1}),
\end{split}
\end{equation}
where $P_{\tau}=\underset{g\in \mathbb{G}}{\mathbb{E}}\tau(g)\otimes \omega(g)^{\otimes 2}$ is a projector. The decay parameter $P_{\tau}(A\otimes \Lambda^{\otimes 2})P_{\tau}$ can be viewed as a matrix, $\Phi$, with dimension $\rank(P_{\tau})\times \rank(P_{\tau})$ when restricting in the space associated with projector $P_{\tau}$. The matrix $\Phi$ has the element
\begin{equation}
    [\Phi]_{i,j}=\frac{1}{|P_j|}\Tr(P_i A^{T}P_{j} \Lambda^{\otimes 2}),
\end{equation}
where $P_i$ is the projector onto the $i$-th copy of $\tau$ inside $\omega^{\otimes 2}$.
\end{proof}

% \section{Simulation}

\section{Formula simplification of Eq.~\eqref{eq:otoccor}}\label{appendixssc:formula}

In this section, we simplify the computation of Eq.~\eqref{eq:otoccor}:
\begin{equation}
    f_A(x,y,\mathbf{g^1},\mathbf{g^2}, m)=\Tr\left(B_{xy}\tau(g_m^1)\otimes\tau(g_m^2)\prod_{i=1}^{m-1}A\tau(g_i^1)\otimes\tau(g_i^2)\right).
\end{equation}
%Here we denote $P\in \mathbb P_n^0$ or $\sigma\in\mathbb P_n^0$ as normalized Pauli operators.
For the Clifford group element $g$, with slight abuse of notations, we denote $g=\omega(g)=U_g$ and $\tau_{ad}(g)=\tau(g)$. Then $g = \tau(g)+\frac{1}{d}\lket{\mathbf{1}}\lbra{\mathbf{1}}$. When $m=1$, we have
\begin{equation}\label{eq:f_A1}
    \begin{split}
    f_A(x,y,\mathbf{g^1},\mathbf{g^2}, m)=&\Tr\left(B_{xy}\tau(g_m^1)\otimes\tau(g_m^2)\right)\\
    =&\Tr\left(\lket{\rho}\lbra{E_x}\otimes\lket{\rho}\lbra{E_y}\tau(g^1)\otimes\tau(g^2)\right)\\
    =&\Tr(\lket{\rho}\lbra{E_x}\tau(g^1))\cdot \Tr(\lket{\rho}\lbra{E_y}\tau(g^2))\\
    =&\lbra{E_x}\tau(g^1)\lket{\rho}\cdot\lbra{E_y}\tau(g^2)\lket{\rho}\\
    =&\lbra{E_x}(\omega(g^1)-\frac{1}{d}\lket{\mathbf{1}}\lbra{\mathbf{1}})\lket{\rho}\cdot\lbra{E_y}(\omega(g^2)-\frac{1}{d}\lket{\mathbf{1}}\lbra{\mathbf{1}})\lket{\rho}\\
    =&\left(\Tr(E_xg^1\rho g^{1\dagger})-\frac{1}{d}\right)\cdot\left(\Tr(E_yg^2\rho g^{2\dagger})-\frac{1}{d}\right).
    \end{split}
\end{equation}
% where $g(P)=g^{\dagger}Pg$ and $g^{-1}(P)=gPg^{\dagger}$. 

For $m\ge 2$, $A=(d^2-1)^2\sum_{\sigma\in \mathbb P_n^0}\Tr(W \sigma W \sigma)\lketbra{\sigma\otimes \sigma}{V \otimes V}$,
\begin{equation}\label{eq:b6}
\begin{split}
    f_A(x,y,\mathbf{g^1},\mathbf{g^2}, m)=&\Tr\left(B_{xy}\tau(g_m^1)\otimes\tau(g_m^2)\prod_{i=1}^{m-1}A\tau(g_i^1)\otimes\tau(g_i^2)\right)\\
    =&\Tr\left(B_{xy}\tau(g_m^1)\otimes\tau(g_m^2)\prod_{i=1}^{m-1}(d^2-1)^2\sum_{\sigma\in \mathbb P_n^0}\Tr(W \sigma W \sigma
    )\lketbra{\sigma\otimes \sigma}{V\otimes V}\tau(g_i^1)\otimes\tau(g_i^2)\right)\\
    =&(d^2-1)^{2(m-1)}\cdot \Tr \left[B_{xy}\tau(g_m^1)\otimes\tau(g_m^2)\sum_{\sigma\in \mathbb P_n^0}\Tr(W \sigma W\sigma)\lket{\sigma\otimes\sigma}\lbra{V\otimes V}\tau(g_1^1)\otimes\tau(g_1^2)\right]\\
     & \quad \cdot\left(\prod_{i=2}^{m-1}\lbra{V\otimes V}\tau(g_i^1)\otimes\tau(g_i^2)\sum_{\sigma\in \mathbb P_n^0}\Tr(W \sigma W\sigma)\lket{\sigma\otimes\sigma}\right).
     \end{split}
     \end{equation}
   For an arbitary normalized Pauli operator $\sigma$, there is $g(\sigma)=g^{\dagger}\sigma g, g^{-1}(\sigma)=g\sigma g^{\dagger}$. We have that $\lbra{\sigma}\tau(g)=\lbra{g(\sigma)}$, thus,
    \begin{equation}\label{eq:b7}
    \begin{split}   \tau(g_m^1)\otimes\tau(g_m^2)\sum_{\sigma\in \mathbb P_n^0}\Tr(W \sigma W \sigma)\lket{\sigma\otimes\sigma}=&\sum_{\sigma\in \mathbb P_n^0} \Tr(W \sigma W\sigma)\lket{g^{1^{-1}}_m(\sigma)\otimes g^{2^{-1}}_m(\sigma)},\\
    \lbra{V^{\dagger}\otimes V}\tau(g_i^1)\otimes\tau(g_i^2)=&\lbra{g_i^1(V)\otimes g_i^2(V)}.
    \end{split}
    \end{equation}
    Denote $D=(d^2-1)^{2(m-1)},\ \delta_{i}=\delta_{g_i^1(V),g_i^2(V)}$. Combine Eq.~\eqref{eq:b7} and Eq.~\eqref{eq:b6}, then we can obtain that
    % \red{$\ (-1)^{\langle W,g_i^1(V) \rangle}=(\pm)$ (what does this mean? and why should we note this?)} 
    \begin{equation}\label{eq:f_A2}
    \begin{split}
    f_A(x,y,\mathbf{g^1},\mathbf{g^2}, m)=&D \Tr\left(B_{xy}\sum_{\sigma\in \mathbb P_n^0}\Tr(W \sigma W\sigma)\lket{g^{1^{-1}}_m(\sigma)\otimes g^{2^{-1}}_m(\sigma)}\lbra{g_1^1(V)\otimes g_1^2(V)}\right)\\
    &\quad\quad\cdot\prod_{i=2}^{m-1}\lbra{g_i^1(V)\otimes g_i^2(V)}\sum_{\sigma\in \mathbb P_n^0}\Tr(W \sigma W\sigma)\lket{\sigma\otimes\sigma}\\
    =&D \Tr\left(B_{xy}\sum_{\sigma\in \mathbb P_n^0}\Tr(W \sigma W\sigma)\lket{g^{1^{-1}}_m(\sigma)\otimes g^{2^{-1}}_m(\sigma)}\lbra{g_1^1(V)\otimes g_1^2(V)}\right)\prod_{i=2}^{m-1}\delta_{g_i^1(V),g_i^2(V)}\Tr(W g_i^1(V))^2\\
     =&D\left(\prod_{i=2}^{m-1}\delta_{i}\Tr(W g_i^1(V))^2\right)\Tr\left(\lket{\rho}\lbra{E_x}\otimes\lket{\rho}\lbra{E_y}\sum_{\sigma\in \mathbb P_n^0}\Tr(W \sigma W\sigma)\lket{g^{1^{-1}}_m(\sigma)\otimes g^{2^{-1}}_m(\sigma)}\lbra{g_1^1(V)\otimes g_1^2(V)}\right)\\
     =&D\left(\prod_{i=2}^{m-1}\delta_{i}\Tr(W g_i^1(V))^2\right)\lbra{E_x\otimes E_y}\sum_{\sigma\in \mathbb P_n^0}\Tr(W \sigma W\sigma)\lket{g^{1^{-1}}_m(\sigma)\otimes g^{2^{-1}}_m(\sigma)}\lbra{g_1^1(V)\otimes g_1^2(V)}\lket{\rho\otimes\rho}\\
     =&D\left(\prod_{i=2}^{m-1}\delta_{i}\Tr(W g_i^1(V))^2\right)\lbra{g_m^1(E_x)\otimes g_m^2(E_y)}\sum_{\sigma\in \mathbb P_n^0}\Tr(W \sigma W\sigma)\lket{\sigma\otimes \sigma}\lbra{g_1^1(V)\otimes g_1^2(V)}\lket{\rho\otimes\rho}\\
     =&D\left(\prod_{i=2}^{m-1}\delta_{i}\Tr(W g_i^1(V))^2\right)\lbra{g_m^1(E_x)\otimes g_m^2(E_y)}(\lket{S(W^{\dagger}\otimes W)}-\frac{1}{d}\lket{I\otimes I})\lbra{g_1^1(V)\otimes g_1^2(V)}\lket{\rho\otimes\rho}\\
     =&D\left(\prod_{i=2}^{m-1}\delta_{i}\Tr(W g_i^1(V))^2\right)\left(\Tr(W^{\dagger}g_m^1(E_x)Wg_m^2(E_y))-\frac{1}{d}\right) \Tr(g_1^1(V)\rho)\Tr(g_1^2(V)\rho).
\end{split}
\end{equation}
With the equation above, one can quickly evaluate the value of $f_A(x, y, \mathbf{g^1}, \mathbf{g^2}, m)$.
% \red{The notation $(\pm)$ is unclear. Change it to $(-1)^{\sum_{i=2}^{m-1} \langle W,g^1_i(V)\rangle }$.}

%%%%%%%%%%%%%%%%%%%%%%%%%%%%%%%%%%%%%%%%
% choose a style
%\bibliographystyle{ieeetr}
%\bibliographystyle{unsrt}
%\bibliographystyle{apsrev4-1}
%%%%%%%%%%%%%%%%%%%%%%%%%%%%%%%%%%%%%%%%
%merlin.mbs apsrev4-1.bst 2010-07-25 4.21a (PWD, AO, DPC) hacked
%Control: key (0)
%Control: author (72) initials jnrlst
%Control: editor formatted (1) identically to author
%Control: production of article title (-1) disabled
%Control: page (0) single
%Control: year (1) truncated
%Control: production of eprint (0) enabled
%

%%%%%%%%%%%%%%%%%%%%%%%%%%%%%%%%%%%%%%%%
% choose a .bib file
%\bibliography{bibshadowRB}
%%%%%%%%%%%%%%%%%%%%%%%%%%%%%%%%%%%%%%%%

%\nocite{*}
%\bibliography{apssamp}% Produces the bibliography via BibTeX.

\end{document}